\theoremstyle{plain}
\newtheorem{theorem}{Theorem}[section]
\newtheorem{corollary}{Corollary}[theorem]
\newtheorem{remark}{Remark}
\title{\textbf{Controlling the flexibility of non-Gaussian processes through shrinkage priors} }
\date{\vspace{-5ex}}
\author{Rafael Cabral, David Bolin and H\aa vard Rue \\ \small{\it{King Abdullah University of Science and Technology, Thuwal, Saudi Arabia}}}
\begin{document}
\maketitle

\begin{abstract}

The normal inverse Gaussian (NIG) and generalized asymmetric Laplace (GAL) distributions can be seen as skewed and semi-heavy-tailed extensions of the Gaussian distribution. Models driven by these more flexible noise distributions are then regarded as flexible extensions of simpler Gaussian models. 
Inferential procedures tend to overestimate the degree of non-Gaussianity in the data and therefore we propose controlling the flexibility of these non-Gaussian models by adding sensible priors in the inferential framework that contract the model towards Gaussianity. In our venture to derive sensible priors, we also propose a new intuitive parameterization of the non-Gaussian models and  discuss how to implement them efficiently in $Stan$. The methods are derived for a generic class of non-Gaussian models that include spatial Matérn fields, autoregressive models for time series, and simultaneous autoregressive models for aerial data. The results are illustrated with a simulation study and geostatistics application, where priors that penalize model complexity were shown to lead to more robust estimation and give preference to the Gaussian model, while at the same time allowing for non-Gaussianity if there is sufficient evidence in the data. 
\end{abstract}

\textbf{Keywords}: Bayesian, Penalised Complexity, Priors, Non-Gaussian, Generalized asymmetric Laplace, Normal inverse Gaussian, SPDE, Matérn covariance.

\vspace{2ex}
\textbf{MSC2020 subject classifications}: Primary 62F15, 62M20; secondary 62M40.

\section{Introduction}


Gaussian processes are the most common class of models to describe spatial and temporal dependence in Bayesian hierarchical models. Due to their well-established theory, flexibility, and practicality, Gaussian processes are also a fundamental building block in spatial and temporal statistics. However, additional flexibility is needed in several applications, and Gaussian processes are replaced by more flexible non-Gaussian Lévy processes \citep{ken1999levy}. In this paper, we study processes driven by generalized hyperbolic (GH) noise, more specifically the normal inverse Gaussian (NIG) and generalized asymmetric Laplace (GAL) subfamilies, which contain the Gaussian distribution as a particular case and are semi-heavy-tailed. 


\subsection{Literature review} \label{sect:litreview}

Autoregressive processes with GH innovating terms are discussed in \cite{GHAR} and can model extreme market movements not captured by the Gaussian model \citep{NIGAR}.  \cite{BIBBY2003211} presented more sophisticated models for financial time series, namely stochastic processes whose marginal distributions or the distributions of increments (or both) are generalized hyperbolic, including classical diffusion processes and stochastic volatility models \citep{barnig,nongaussianbarnsdorf,bareco}. The empirical distributions of the log-returns of financial time series are too heavy for satisfactory fitting by normal densities, and the GH distribution can better capture the stylized features of financial data. 

In the field of spatial statistics, \cite{bolin2014spatial} provided a class of non-Gaussian random fields with Matérn covariance function, constructed as solutions to stochastic partial differential equations (SPDEs) driven by Generalized Hyperbolic noise. The previous models were applied in the context of geostatistics \citep{wallin2015geostatistical}, joint modeling of multivariate random fields \citep{bolin2020multivariate}, 
and to model continuously repeated measurement data collected longitudinally \citep{asar2020linear}, where the non-Gaussian models led to improved predictive power in datasets where there were sharp spikes or jumps in the observed data that the Gaussian models oversmoothed. 

The aforementioned papers performed parameter estimation of non-Gaussian models via likelihood maximization techniques or by the method of moments. Bayesian estimation was carried out for the same geostatistical framework of \cite{wallin2015geostatistical} by \cite{walder2020bayesian}. A Bayesian framework has also been considered for stochastic volatility models \citep{sv1,sv2}, for a vector of autoregressive processes \citep{karlsson2021vector}, and GARCH models \citep{GHSSTGH}, and all previously cited papers utilized the generalized hyperbolic skew Student’s $t$ (GHSST) distribution for the driving noise. In the same papers, a gamma prior was chosen for the leptokurtosis parameter and a normal prior for the skewness parameter of the GHSST distribution. There was no principled motivation for the use of these priors other than mathematical or computational convenience, and the gamma prior places 0 mass at the base Gaussian model, thus forcing overfitting of the data \citep{simpson2017penalising}.




\subsection{Motivation and contributions}


Here we present the layout of the paper and highlight what are its main contributions to the literature on non-Gaussian processes from a Bayesian perspective. We find in the current literature challenges when it comes to model interpretability, sensible prior selection, and ease of implementation which we address next.


\emph{Parameterization and interpretability}: Parameterizations of the NIG and GAL distributions involve 4 parameters that are used to regulate the mean, variance, skewness, and kurtosis of the distribution. However, the parameterizations are not themselves property-based leading to a difficult interpretation of the parameters. This also presents a difficulty when constructing priors for the parameters, since, for instance, the mean and variance will depend not only on the location and scale parameter but also on the other parameters. In section \ref{sect:distributions} we provide a property-based parameterization of the NIG and GAL distributions that preserves the mean and covariance structure of the Gaussian model.


\emph{Priors that avoid overfitting and lead to a robust estimation}: Fig. \ref{fig:overfit} shows posterior means of the leptokurtosis parameter $\eta$ of the NIG distribution when fitting i.i.d. standard Gaussian data. These were obtained by numerical integration of the unnormalized $\pi(\eta|\mathbf{y}) \propto \prod_i^n \pi_{\text{NIG}}(y_i|\eta) \pi(\eta)$, where $\pi_{\text{NIG}}$ is the probability density function (PDF) of the standardized and symmetric NIG distribution of subsection \ref{sect:varpar}.  The Gaussian and Cauchy distributions are limit distributions when $\eta \to 0$ and $\eta \to \infty$, respectively. If we use a uniform prior for small samples the results suggest the necessity of a non-Gaussian model when the simpler Gaussian model should be the preferred one. We should prefer the Gaussian model not only for parsimony's sake but also because for Gaussian simulated data, the expected predictive performance will be the highest for the Gaussian model if a strictly proper scoring rule is used to evaluate the predictive accuracy \citep{gneiting2007strictly}. The tendency to overestimate the degree of non-Gaussianity in the data is more severe when a non-Gaussian model is added as a latent component in a hierarchical model and can be prevalent even for samples sizes of 500, as our simulation study in section \ref{sect:sim} shows. This issue motivates an inferential framework that contracts non-Gaussian models towards Gaussianity in the absence of sufficient evidence of non-trivial leptokurtosis or asymmetry. In section \ref{sect:priors} we construct prior distributions for the skewness and leptokurtosis parameters of the NIG and GAL distribution based on the penalized complexity (PC) priors principled approach of \cite{simpson2017penalising}.

\begin{figure}[]
    \centering
  \includegraphics[width=0.6\linewidth]{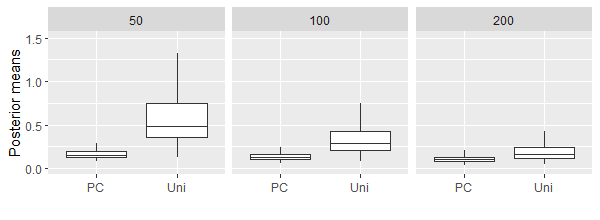} 
  \includegraphics[width=0.39\linewidth]{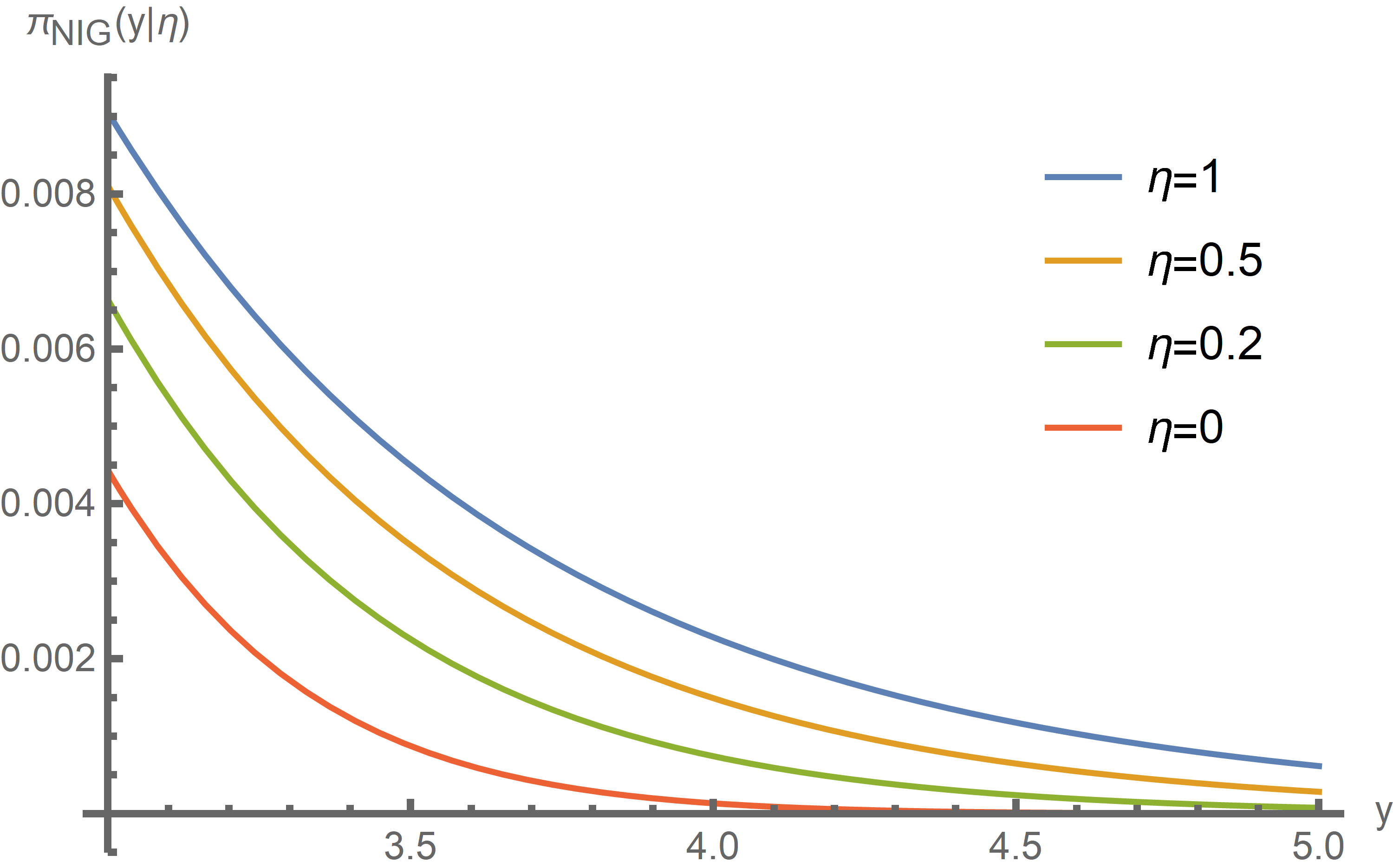} 
  \caption{The box plots show the posterior means of the leptokurtosis parameter $\eta$ for i.i.d. Gaussian simulated data with sample sizes of $n=50,100,200$ with penalized complexity (PC) and uniform (Uni) priors, repeated over 1000 experiments. The right plot shows the right tail of the symmetric NIG PDF for several values of $\eta$.}
  \label{fig:overfit}
\end{figure}

\emph{Many models, one framework}:  We presented in the previous subsection a wide range of models, defined in discrete or continuous space, for time series or spatial data. We unify these models into a single generic class, which we examine in section \ref{sect:distributions}. We also offer in section \ref{sect:manymodels} a set functions for the $Stan$ platform \citep{stan} that allow a straightforward implementation of these models.

\emph{Bayesian analysis}: A simulation study is conducted to compare the PC priors with other priors choices in section \ref{sect:sim}, and the developed methods are applied to two spatial datasets in section \ref{sect:appli}. The analysis showed that the PC priors achieve the sought contraction towards the Gaussian model when there is not enough convincing evidence in the data of non-Gaussianity and lead to more robust estimation.

Finally, section \ref{sect:discussion} contains a summary and discussion of future work and possible extensions.

\section{A flexible extension of Gaussian models}\label{sect:distributions}

The GH distribution \citep{barndorff1978hyperbolic} can be conveniently represented as a variance mixture of normal distributions, where the mixing distribution is a generalized inverse Gaussian (GIG) random variable. If a random variable $\Lambda$ follows the GH distribution, then it has the following hierarchical representation 
\begin{align} \begin{split} \label{eq:noise1}
\Lambda|V &\sim \text{N}(\mu  + \beta V, V),  \ \ V  \sim \text{GIG}(\lambda, \delta, \sqrt{\alpha^2-\beta^2}),
\end{split} \end{align}
 where $\delta$,$\mu$, and $\beta$ are scale, location, and skewness parameters, respectively, while $\alpha$ and $\lambda$ are two shape parameters. The constraints on these parameters are found in \cite{barndorff2012levy}. The GH distribution includes many widely used distributions as special cases, including the Gaussian, t-Student, Cauchy, NIG, and GAL distributions. The NIG subclass is obtained by setting $\lambda=-1/2$, leading to an inverse Gaussian (IG) mixing distribution. The PDF of a NIG distribution is
$$
f(x; \alpha, \beta, \delta, \mu)=\frac{\alpha \delta}{\pi} \frac{K_{1}\left(\alpha \sqrt{\delta^{2}+(x-\mu)^{2}}\right)}{\sqrt{\delta^{2}+(x-\mu)^{2}}} e^{\delta \sqrt{\alpha^{2}-\beta^{2}}+\beta(x-\mu)},
$$
where $K_\lambda(x)$ is the modified Bessel function of the second kind of order $\lambda$. On the other hand, the GAL subclass is obtained from the limit $\delta \to 0$, leading to a gamma mixing variable and the PDF
$$
f(x; \lambda, \alpha, \beta, \mu)=\frac{\left(\alpha^{2}-\beta^{2}\right)^{\lambda}}{\sqrt{\pi}(2 \alpha)^{\lambda-\frac{1}{2}} \Gamma(\lambda)}|x-\mu|^{\lambda-\frac{1}{2}} K_{\lambda-\frac{1}{2}}(\alpha|x-\mu|) e^{\beta(x-\mu)}.
$$

As a necessary step in building explicable priors, in the following subsections \ref{sect:varpar} and \ref{sect:tailpar} we present a standardized parameterization of the NIG and GAL distributions which also aims to achieve an orthogonal interpretation of the parameters. Then in subsection \ref{sect:framework}, we define the distribution of multivariate models that are driven by NIG or GAL noise.

\subsection{Standardized parameterization  $(\eta, \zeta)$} \label{sect:varpar}

In the previous parameterization $(\lambda, \alpha, \beta, \delta, \mu)$, the location ($\mu$) and scale ($\delta$) parameters do not correspond to the mean and standard deviation of the distribution. For the NIG distribution setting $\mu=0$ and $\delta=1$ leads to $E[\Lambda] = \beta/(-\alpha^2+\beta^2)$ and \mbox{$V[\Lambda]=\alpha^2/(\alpha^2-\beta^2)^{3/2}$}, and so the mean and variance depend on the degree of asymmetry and leptokurtosis of the distribution. There are alternative parameterizations of the GH distribution in the literature \citep{paolella2007intermediate, prause1999generalized}, including location-scale invariant ones, but these continue to not be property-based.

As in \cite{niekerkskewprobit} we posit that when constructing priors the mean and the standard deviation should be fixed, instead of location and scale parameters. This not only eases interpretation but also allows assigning the priors for the mean and scale of the Gaussian model in non-Gaussian models, which is very convenient when implementing these non-Gaussian models in practice (otherwise reasonable priors for the location and scale would have to depended on $\alpha$ and $\beta$). Table \ref{tab:GH} shows the conversion between the standardized parameterization and the more conventional one in eq. \eqref{eq:noise1}, along with some relevant moments (consider for now $h=1$). The variance-mean mixture representation belonging to this parameterization is
\begin{align} \label{eq:noise21}
\Lambda|V &\sim \text{N}\left(\frac{1}{\sqrt{1+\eta \zeta^2}} \zeta (V-h), \ \ \frac{1}{1+\eta \zeta^2}V\right),
V {\sim}  \begin{cases}
      \text{IG}(1,\eta^{-1})   &\text{(NIG)} \\
      \text{Gamma}(\eta^{-1},\eta^{-1})  &\text{(GAL).}
    \end{cases} 
\end{align}

A location and scale parameter can be added by the usual transformation $m + \sigma\Lambda$, and the mean and variance will be $m$ and $\sigma^2$, regardless of the other parameters' value. The parameter $\eta$ is related to the degree of non-Gaussianity since $\Lambda$ converges to a Gaussian random variable when $\eta \to 0$, and as we increase $\eta$, the excess kurtosis increase.  The parameter $\zeta$ is related to the asymmetry of the random variable since $\Lambda$ is symmetric when $\zeta=0$, and when $\zeta>0$, it is skewed to the right. Here and henceforth, we will make use of the notions of ``base model'', ``flexible model'' and ``flexibility parameter'' defined in \cite{simpson2017penalising}. We see the non-Gaussian model parameterized by $(\eta, \zeta)$ as a flexible extension of the base Gaussian model since it contains the Gaussian model as a special case (when $\eta=\zeta=0$) and deviations from the Gaussian model are quantified by the parameters $\eta$ and $\zeta$ to which we refer as flexibility parameters. The densities of the NIG and GAL distributions with the previous parameterization are displayed in Fig. \ref{fig:NIGplot}. We observe that the larger the value of $\eta$, the larger the asymmetry induced by the same value of the parameter $\zeta$.


\begin{figure}[h]
    \centering
 \includegraphics[width=\linewidth]{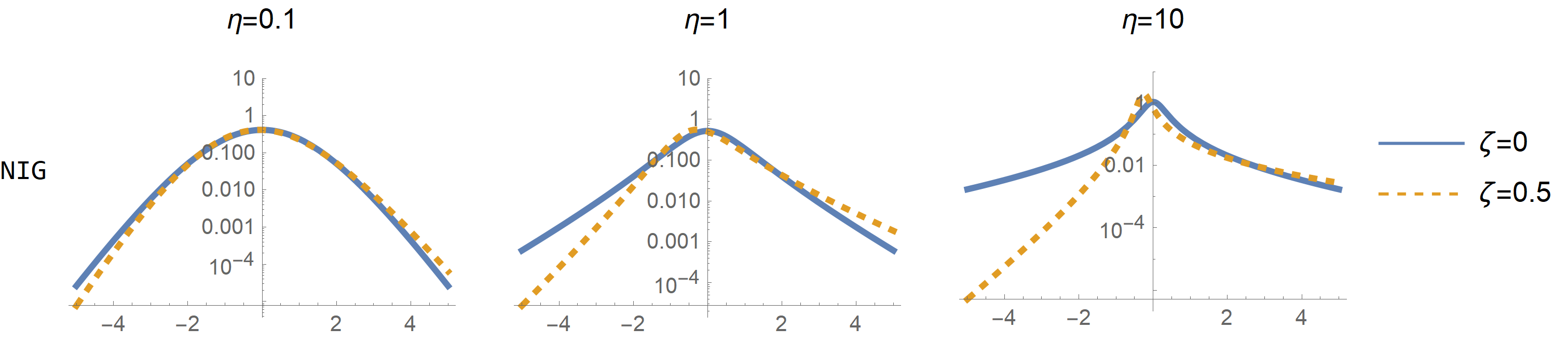} 
 \includegraphics[width=\linewidth]{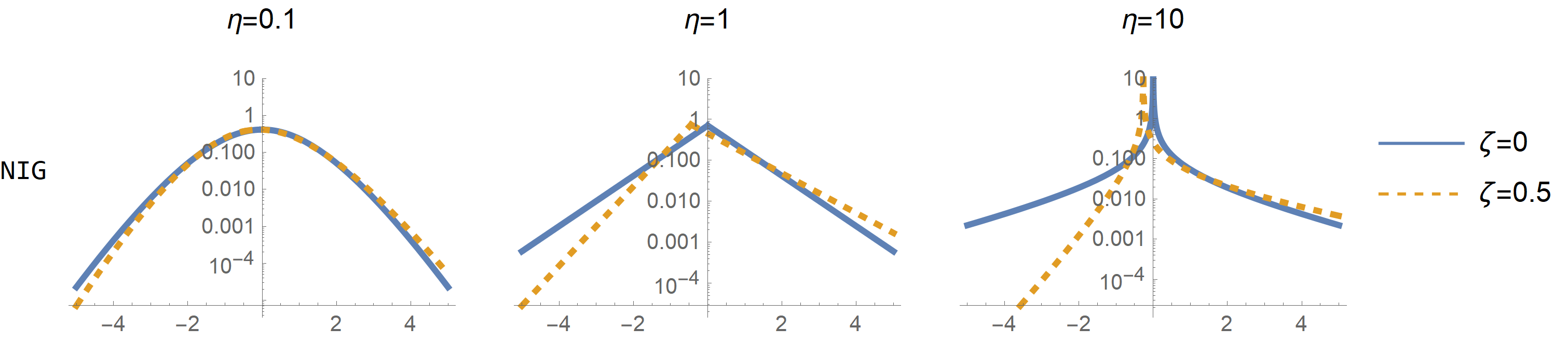} 
  \caption{Density of the NIG (top) and GAL (bottom) distributions in log scale for different values of $\eta$ and $\zeta$ with $\sigma=1$ and $h=1$.}
  \label{fig:NIGplot}
\end{figure}

\begin{table}
    \centering
    \begin{tabular}{cccccccccc}
    \hline
    \textbf{Dist.} & $\boldsymbol{\lambda}$ & $\boldsymbol{\alpha}$                               & $\boldsymbol{\beta}$           & $\boldsymbol{\delta}$          & $\boldsymbol{\mu}$  & \textbf{S}                            & \textbf{EK}               \\ \hline
    NIG            & -1/2             & $\frac{1}{\tilde{\sigma}} \sqrt{\frac{1}{\eta}+\zeta^2}$  & $\frac{\zeta}{\tilde{\sigma}}$ & $\frac{\tilde{\sigma}}{\sqrt{\eta}}h$ & $-\tilde{\sigma}\zeta h$     & $\frac{3\zeta\eta}{\sqrt{h+h\zeta^2\eta}}$             & $\frac{3\eta(1+5\zeta^2\eta)}{h(1+\zeta^2\eta)}$       \\
    GAL            & $h\eta^{-1}$        & $\frac{1}{\tilde{\sigma}} \sqrt{\frac{2}{\eta}+\zeta^2}$ & $\frac{\zeta}{\tilde{\sigma}}$ & 0                        & $-\tilde{\sigma}\zeta h$  & $\frac{\zeta\eta(3+2\zeta^2\eta)}{\sqrt{h(1+\zeta^2\eta)^{3}}}$ & $\frac{3 \eta (1 + 4 \zeta^2 \eta + 2 \zeta^4 \eta^2)}{h(1 + \zeta^2 \eta)^2}$ \\ \hline
    \end{tabular}
    \caption{Parameters of the GH distribution, skewness (S), excess kurtosis (EK) for the GAL and NIG special cases, where $\tilde{\sigma} =  1 / (\sqrt{1+\eta \zeta^2})$. }
    \label{tab:GH}
\end{table}

\subsection{Standardized and orthogonal parameterization $(\eta^\star,\zeta^\star)$}
\label{sect:tailpar}

There is still some confounding between $\eta$ and $\zeta$ in the standardized parameterization since the excess kurtosis increases with $\zeta$, so a prior for $\eta$ derived for a symmetric model may not shrink the models towards Gaussianity as much as we anticipate if the data is asymmetric. Thus, we need to associate $\eta$ with some interpretable model property and then guarantee that the property is invariant with the second flexibility parameter. This will also allow using the PC prior of $\eta$, which is derived in section \ref{sect:priors} for the symmetric case, in asymmetric data. 

\sloppy We find the kurtosis to be hard to interpret (it is not clear what an increase in kurtosis of 1 means in practice) and so we prefer to associate $\eta$ with the likelihood of large events. For the NIG distribution this probability can be approximated by \mbox{$P(|\Lambda|>x) \propto x^{-3/2} \exp(-\eta^{-1/2} \xi_{NIG} x)$} for large $x$ \citep{hammerstein2016tail}. The dependency of this probability with the skewness parameter $\zeta$ comes through the rate $\xi_{NIG}$:
$$\xi_{\text{NIG}} = 1+\zeta^2\eta - |\zeta|\sqrt{\eta(1+\zeta^2\eta)},$$
which is equal to 1 in the symmetric case ($\zeta=0$). We require the probability $P(|\Lambda|>x)$ to be invariant with the skewness parameter, at least for large $x$. This can be achieved by the parameter transformations $\eta^\star = \eta\xi_{NIG}^{-2}$ and $\zeta^\star = \zeta\sqrt{\eta}$. The same transformation applies for the GAL distribution where one should take $$\xi_{\text{GAL}}=\sqrt{1+\zeta^2\eta}(\sqrt{2+\zeta^2\eta}-|\zeta|\sqrt{\eta}).$$

We note that these transformations on the flexibility parameters do not affect the mean (0) and the variance (1) of the standardized parameterization. The quantity $\xi$ is also related to the excess kurtosis, namely $\text{EKurt}[\Lambda]\approx 9/\xi^2$, for both the NIG and GAL distributions. Therefore, by guaranteeing that $P(|\Lambda|>x)$ is invariant with $\zeta^\star$ for large $x$, we also guarantee that the kurtosis of the noise (and the process' marginal distributions) remains approximately the same for different values of $\zeta^\star$, leading to a more orthogonal interpretation of the parameters.
\subsection{Multivariate models driven by non-Gaussian noise}\label{sect:framework}


 
Let $\mathbf{x}^G$ be a random vector that follows a multivariate Gaussian distribution with dimension $n$, mean $m$, and precision matrix $\mathbf{Q}= \sigma^{-2} \mathbf{\mathbf{D}}^T \mathbf{D}$. It can be expressed through \begin{equation}\label{eq:gaussian}
\mathbf{D}(\mathbf{x}^G-m) \overset{d}{=} \sigma \mathbf{Z},
\end{equation}
where $\mathbf{Z} = [Z_1, \dotsc, Z_n]^T$ is a vector of i.i.d. standard Gaussian variables. When $\mathbf{x}^G$ is derived from a process defined in continuous space, one usually has $Z_i  \overset{ind.}{\sim} \text{N}(0,h_i)$ for some predefined constant $h_i$ (for instance, the distance between locations in an autoregressive process), and so the precision matrix is $\mathbf{Q}= \sigma^{-2} \mathbf{D}^T\text{diag}(\mathbf{h})^{-1}\mathbf{D}$. The non-Gaussian extension for $\mathbf{x}$ consists in replacing the driving noise distribution:
\begin{equation}\label{eq:frame}
\mathbf{D}(\mathbf{x}-m)\overset{d}{=} \sigma\mathbf{\Lambda},
\end{equation}
where $\boldsymbol{\Lambda} = [\Lambda_1, \dotsc, \Lambda_n]^T$ is a vector of independent and standardized NIG or GAL random variables that depend on parameters $\eta$ and $\zeta$. Using the parameterizations in subsections \ref{sect:varpar} and \ref{sect:tailpar} for $\mathbf{\Lambda}$, the non-Gaussian random vector $\mathbf{x}$ has the same mean and covariance matrix as the Gaussian random vector $\mathbf{x}^G$ but with more flexible sample path properties and marginal distributions whose kurtosis and skewness are regulated by the flexibility parameters $\eta$ and $\zeta$. Appendix B contains more properties about non-Gaussian models defined via eq. \eqref{eq:frame}.  It is important to mention that these models are not uniquely specified by their covariance or precision matrices but through the matrix $\mathbf{D}$. In section \ref{sect:manymodels} we show that many of the non-Gaussian models mentioned in the Introduction belong to the generic class of models defined by eq. \eqref{eq:frame}.

A variance-mean mixture representation of the non-Gaussian random vector  $\mathbf{x}$ is obtained by considering the random vector $\mathbf{V}= [V_1, \dotsc, V_n]^T$ and the predefined vector $\mathbf{h}= [h_1, \dotsc, h_n]^T$, containing the mixing distributions and some predefined constants, respectively. In the standardized parameterization it is:
\begin{align} \label{eq:model}
\begin{split} 
\mathbf{x}|\mathbf{V} &\sim \text{N}\left(m + \frac{\sigma}{\sqrt{1+\zeta^2\eta}} \zeta \mathbf{D}^{-1}(\mathbf{V}-\mathbf{h}),\frac{\sigma^2}{1+\zeta^2\eta} \mathbf{D}^{-1}\text{diag}(\mathbf{V})\mathbf{D}^{-T} \right), \\
V_{i} &\overset{ind.}{\sim}  \begin{cases}
      \text{IG}(h_i,\eta^{-1} h_i^2)   &\text{(NIG Noise)} \\
      \text{Gamma}(h_i \eta^{-1},\eta^{-1})  &\text{(GAL Noise).}
    \end{cases}
  \end{split}
\end{align}

To broaden the applicability of the multivariate model $\mathbf{x}$ to data measured at irregularly spaced locations or time intervals, we need to consider the predefined vector $\mathbf{h}$. Take, for instance, a continuous Gaussian random walk of order 1 (RW1) process evaluated at locations $x_1, x_2, \dotsc$ with distances $h_i=x_{i+1}-x_{i}$. In this situation, the model is defined by the increments $x_{i+1}-x_{i}$ following a normal distribution with mean 0 and variance $h_i\sigma^2$ \citep{rue2005gaussian}. Therefore, a larger distance between observations will lead to a noise Gaussian distribution with increased variance. A similar transformation applies when we discretize a NIG or GAL continuous process, but the increased distance between locations will not only change the variance to $h_i\sigma^2$ but also will the shape of the distribution. This is a standard result from the theory of Lévy processes, where the new densities are found by raising the NIG and GAL characteristic functions to the power $h_i$ and then taking the inverse-Fourier transform (see Appendix A and \cite{barndorff2012levy}). The variables $\Lambda_i$ will share the $\sigma$, $\eta$ and $\zeta$ parameters, but $h_i$ will generally be unique to each noise variable (see Table \ref{tab:GH}).  If the observations are measured at equally spaced locations we set $\mathbf{h}$ to be a vector of ones.

\section{Contraction towards Gaussianity}\label{sect:priors}

Our goal is to control the flexibility of non-Gaussian models and shrink the model towards Gaussianity if there is insufficient evidence in the data of excess kurtosis and asymmetry. One natural and simple way to achieve this contraction is by adding in the inferential framework priors for the flexibility parameters $\eta$ and $\zeta$ that give preference to the Gaussian model. 

The prior distributions derived next are applicable to a large variety of models and can be easily added to pre-existing model implementations.  We will make use of the penalized complexity (PC) priors methodology of \cite{simpson2017penalising}. The approach consists of setting an exponential distribution on the unidirectional \emph{distance} $d(\cdot)$, measuring the added complexity of the more flexible model, with respect to the base model. The measure of complexity is based on the Kullbeck-Leibler divergence (KLD), and $d(\eta)=\sqrt{2KLD(\pi(\mathbf{x}|\eta) \ || \ \pi (\mathbf{x}|\eta=0))}$.  PC priors give preference to the simpler model and tend to avoid overfitting by default because the mode of the prior distribution is located at the base (Gaussian) model: $\pi(\eta) = \theta |d'(\eta)| \exp{(-\theta d(\eta))}$. The previous prior is defined up to a rate parameter $\theta$ which needs to be calibrated (see subsection \ref{sect:scalingeta}).

\begin{remark}
An important feature of the class of non-Gaussian models that we are studying, which are obtained by replacing Gaussian white noise with more flexible alternatives, is that from an information-theoretic perspective, the deviation from the Gaussian model only depends on the flexible noise and does not depend on the covariance structure encoded by the matrix $\mathbf{D}$. This may not seem obvious at first, since different choices for the matrix $\mathbf{D}$ lead to different sample path behaviors and marginal properties (see \mbox{Fig. \ref{fig:compare1}}). However, if we construe the density of the non-Gaussian random vector as linearly transformed NIG or GAL densities $\mathbf{x} = m + \sigma\mathbf{D}^{-1} \mathbf{\Lambda}$, and similarly for the Gaussian random vector $\mathbf{x}^G =m + \sigma \mathbf{D}^{-1} \mathbf{Z}$, this property follows directly from the well know invariance of the KLD under monotonic transformations:
\begin{equation*}\label{eq:KLDinv}
KLD(\mathbf{x} \ || \ \mathbf{x}^G) = KLD(m + \sigma\mathbf{D}^{-1}\boldsymbol{\Lambda} \ || \ m + \sigma\mathbf{D}^{-1} \boldsymbol{Z}) = KLD(\boldsymbol{\Lambda} \ ||\ \boldsymbol{Z}) = \sum_{i=1}^n KLD(\Lambda_i \ || \ Z_i). 
\end{equation*}
\end{remark}
\vspace{-0.5cm}
The previous property implies that the distribution of the PC prior does not depend on $\mathbf{D}$, $\sigma$, or $m$ and is therefore applicable to any non-Gaussian model that can be expressed via eq. \eqref{eq:frame}.  Thus, without loss of generality, we consider in this section processes with 0 mean ($m=0$) and driving noise with unit variance ($\sigma=1$). In the following subsection, we derive the PC prior of $\eta$ for the symmetric case ($\zeta=0$), and then in subsection \ref{Sect:PCpriormu}, we find the PC prior of $\zeta$ conditioned on $\eta$.

\subsection{PC prior distribution for the first flexibility parameter}
\label{Sect:PCprioreta}

The non-Gaussian extension presented in subsection \ref{sect:framework} preserves the mean and covariance structure of $\mathbf{x}$, so we can assume that both the flexible and base models have the same scale parameter $\sigma$ and the same spatial (or temporal) range parameter, such as the parameter $\kappa$ in the Matérn model of eq. \eqref{eq:SPDE}. These parameters cancel out when computing $KLD(\mathbf{x} \ || \ \mathbf{x}^G)$ as seen in Remark 1, and therefore $KLD(\mathbf{x} \ || \ \mathbf{x}^G)$  only depends on $\eta$ and $\mathbf{h}$.  The non-Gaussian noise $\Lambda_i$ can follow either a NIG or a GAL distribution, and for both cases, $KLD(\Lambda_i \ || \ Z_i)$, which should be seen as a function of $\eta$, behaves quadratically near the base model (see \mbox{Fig. \ref{fig:kld} (a)}). In the symmetric case the parameterizations $(\eta, \zeta = 0)$, and $(\eta^\star, \zeta^\star = 0)$ are equivalent, therefore the same prior distribution is assigned for $\eta$ and $\eta^\star$.

\begin{figure}[]
     \centering
    \includegraphics[width=0.49\linewidth]{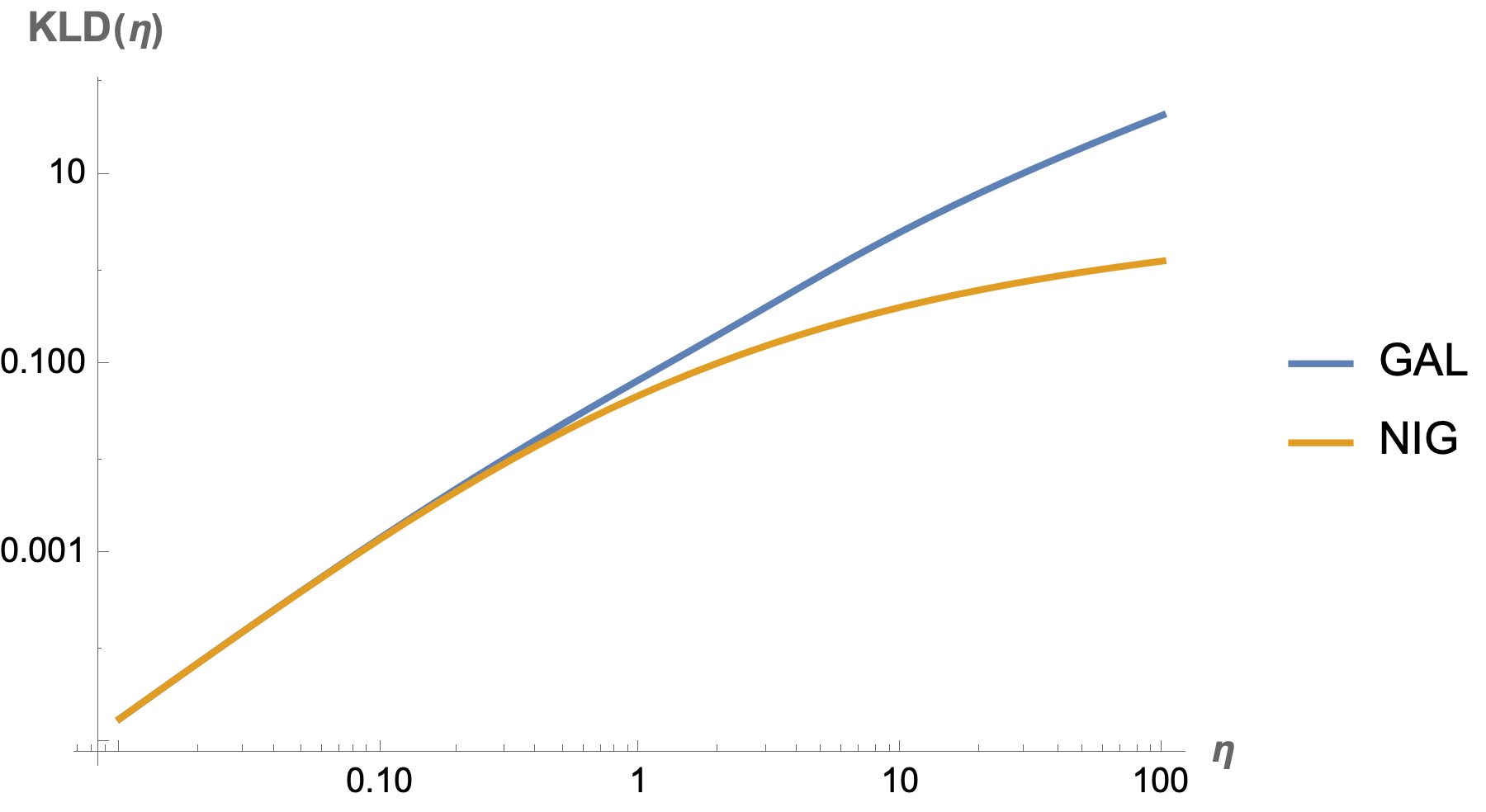}    \includegraphics[width=0.49\linewidth]{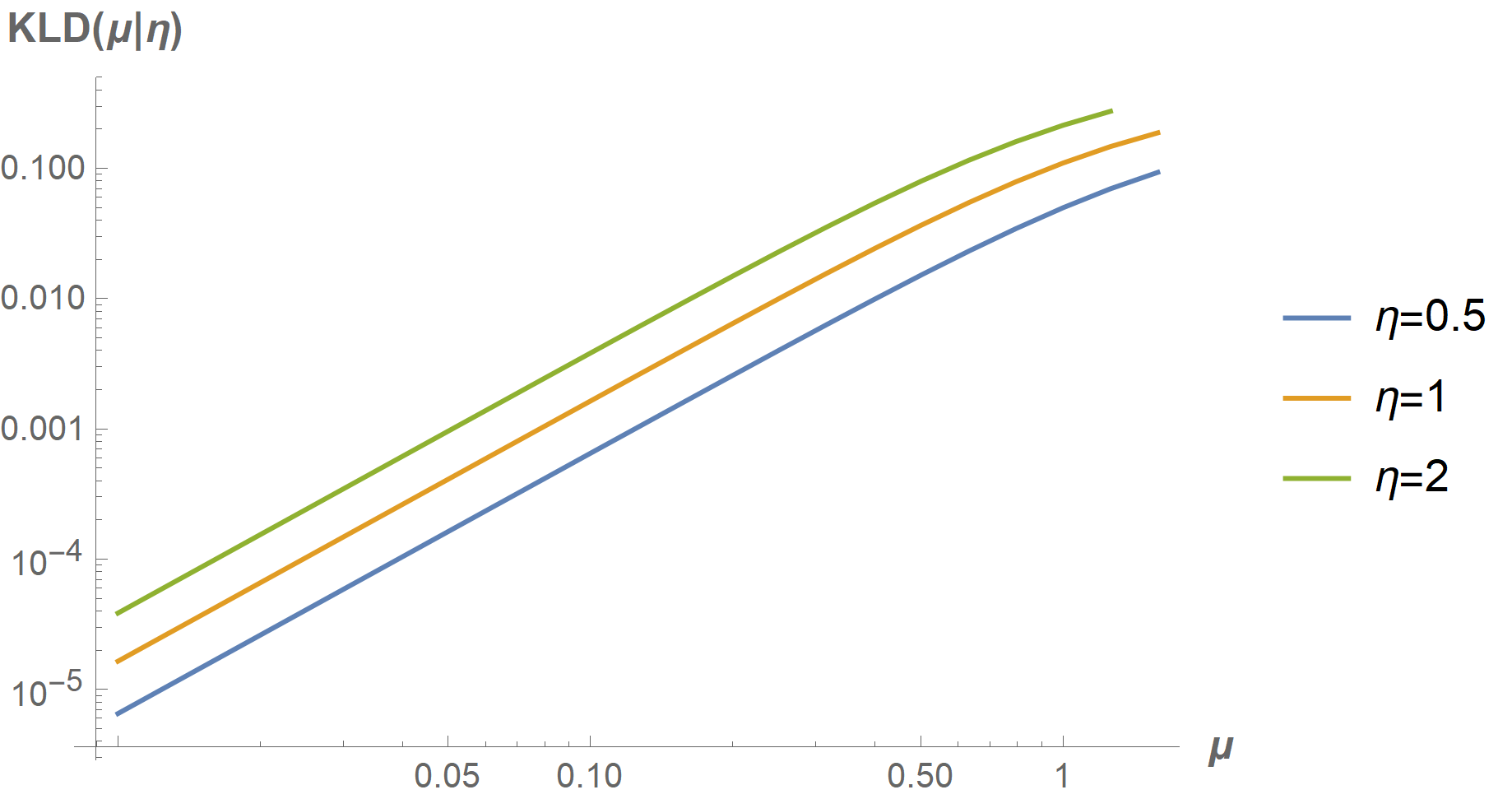}
    \caption{$KLD$ computed by numerical integration for $h=1$ in log-log scale. The plot on the left shows $KLD(\eta,\zeta=0)$ for NIG and GAL noise, and the plot on the right shows $KLD(\zeta|\eta)$ for three fixed values of $\eta$ for NIG noise.}
    \label{fig:kld}
\end{figure}
\begin{theorem}\label{theo:1}
Let $\mathbf{D}$ be a $n \times n$ non-singular matrix. Also, let the flexible model $\mathbf{x}$ with density $\pi(\mathbf{x}|\eta, \zeta, h)$ be defined by $\mathbf{D}\mathbf{x}=\mathbf{\Lambda}$, where $\mathbf{\Lambda}$ is a vector of independent NIG or GAL noise defined in subsection \ref{sect:varpar}. Then, for small $\eta$, the KLD is
$$KLD\left( \ \pi(\mathbf{x}|\eta, \zeta=0) \ || \ \pi^G(\mathbf{x}|\eta=0, \zeta=0) \ \right) = \frac{3}{16}\left(\sum_{i=1}^n\frac{1}{h_i^2}\right)\eta^2 + \mathcal{O}(\eta^4).$$
where $\pi^G(\mathbf{x}|\eta=0,\zeta=0)$ is the density of the base Gaussian model.
\end{theorem}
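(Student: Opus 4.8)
The plan is to reduce the $n$-dimensional statement to a one-dimensional computation and then expand the KLD of a single noise component for small $\eta$. The reduction is immediate from the Remark: by invariance of the KLD under the linear map $\mathbf{D}^{-1}$ together with independence of the components, \eqref{eq:KLDinv} gives $KLD(\pi(\mathbf{x}\mid\eta,\mu=0)\,\|\,\pi^B)=\sum_{i=1}^n KLD(\Lambda_i\,\|\,Z_i)$, where each $\Lambda_i$ is a symmetric ($\mu=0$) NIG or GAL variable in the parameterization \eqref{eq:noise21} and $Z_i\sim\mathrm{N}(0,\sigma^2 h_i)$. Because the non-Gaussian extension preserves the first two moments, $E[\Lambda_i]=0$ and $E[\Lambda_i^2]=\sigma^2 h_i=E[Z_i^2]$, and (as noted in subsection \ref{sect:varpar}) $\sigma$ cancels out of each summand; so it suffices to prove $KLD(\Lambda_i\,\|\,Z_i)=\tfrac{3}{16 h_i^2}\eta^2+\mathcal{O}(\eta^4)$ for one component, after which summation produces the stated constant $\tfrac{3}{16}\sum_i h_i^{-2}$.

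For the scalar problem I would first use that, in the symmetric variance-matched case, the KLD equals the negentropy: since $f_Z$ is Gaussian and the first two moments agree, $E_\Lambda[\log f_Z(\Lambda)]=-H(Z)$, so $g(\eta):=KLD(\Lambda\,\|\,Z)=H(Z)-H(\Lambda)$ is a pure difference of differential entropies at common variance. I would record $g(0)=0$ and, differentiating under the integral, $g'(0)=0$, so the expansion begins at order $\eta^2$. The leading coefficient follows from the standard Edgeworth/Gram--Charlier negentropy expansion $g\approx\kappa_3^2/12+\kappa_4^2/48$, which in the symmetric case ($\kappa_3=0$) reduces to $\kappa_4^2/48$ with $\kappa_4$ the excess kurtosis. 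The one concrete input left is $\kappa_4$: for a symmetric normal variance mixture $\kappa_4=3\,\mathrm{Var}(V)/E[V]^2$, and both the $\mathrm{IG}(h,\eta^{-1})$ and $\mathrm{Gamma}(h\eta^{-1},\eta^{-1})$ mixing laws of \eqref{eq:noise21} have $E[V]=h$ and $\mathrm{Var}(V)=h\eta$, hence $\kappa_4=3\eta/h$ (consistent with the $\mu=0$ entries of Table \ref{tab:GH} for both noises). Substituting gives $g(\eta)=\tfrac{1}{48}(3\eta/h)^2+\dots=\tfrac{3}{16 h^2}\eta^2+\dots$.

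The hard part is justifying the expansion rigorously and pinning down the remainder order, precisely because it is the tails that are delicate. The NIG and GAL laws are only semi-heavy-tailed, so $f_\Lambda/f_Z\to\infty$ as $|x|\to\infty$ (from subsection \ref{sect:tailpar}, $P(\Lambda>x)\propto x^{-3/2}e^{-\xi x}$ with $\xi=1/(\sigma\sqrt\eta)$ when $\mu=0$); consequently the $L^2(\phi)$ Gram--Charlier series does not converge and the surrogate $\tfrac12\int(f_\Lambda-f_Z)^2/f_Z$ is in fact infinite, even though $g(\eta)$ is finite. I would therefore split $\mathbb{R}$ into a central region $|x|\le M(\eta)$, where $\log(1+\rho)=\rho-\tfrac12\rho^2+\dots$ is valid and Hermite orthogonality delivers the $\kappa_4^2/48$ term, and a tail region whose contribution I would bound directly from the explicit decay, using $\xi\sim\eta^{-1/2}\to\infty$ to render it negligible; balancing the cutoff $M(\eta)$ against these estimates is the principal technical obstacle. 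One caveat on the exponent: the formal central expansion already produces a nonzero cubic contribution ($\int\phi\,r^3\propto\kappa_4^3$ survives under $\mu=0$ because $r$ is even), so the claimed $\mathcal{O}(\eta^4)$ remainder is not automatic from symmetry alone, and its verification requires working with the genuine densities, whose true Taylor coefficients in $\eta$ need not match the non-convergent formal series. As an independent check I would compute $g''(0)$, and if necessary $g'''(0)$, directly from the mixing representation.
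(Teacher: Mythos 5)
Your reduction to a single noise component via the KLD invariance \eqref{eq:KLDinv} is exactly the paper's first step, but from there the two arguments diverge. The paper computes $KLD(\Lambda_i\,\|\,Z_i)$ by brute force: for NIG it Taylor-expands the Bessel-function density in $\eta$ as the Gaussian density times a polynomial correction, for GAL it expands the characteristic function and inverts term by term, and in both cases it integrates to obtain $\tfrac{3}{16h_i^2}\eta^2-\tfrac{9}{16h_i^3}\eta^3+\tfrac{c}{128h_i^4}\eta^4+\mathcal{O}(\eta^5)$ with $c=261$ (NIG) and $c=401$ (GAL). Your route --- negentropy at matched first and second moments plus the cumulant expansion $\kappa_4^2/48$ with $\kappa_4=3\,\mathrm{Var}(V)/E[V]^2=3\eta/h_i$ --- reaches the same leading coefficient and buys something the paper's computation does not: a unified, structural explanation of why NIG and GAL share the identical $\eta^2$ (and $\eta^3$) coefficients, namely that both are normal variance mixtures whose mixing laws agree in mean $h_i$ and variance $h_i\eta$; the paper instead verifies the coincidence by two separate calculations. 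What the paper's approach buys is explicit higher-order coefficients without confronting convergence of the Gram--Charlier series; your worry about the semi-heavy tails is legitimate, but the paper sidesteps it only by expanding and integrating formally, so neither argument is fully rigorous on that point and yours is no worse. Finally, your ``caveat on the exponent'' is vindicated by the paper itself: its expansion contains the nonzero cubic term $-\tfrac{9}{16}\bigl(\sum_{i}h_i^{-3}\bigr)\eta^3$, which your formal $-\tfrac{1}{6}\int\phi\rho^3$ contribution reproduces exactly since $-\kappa_4^3/48=-\tfrac{9}{16}\eta^3/h_i^3$; consequently the remainder in the theorem as stated should read $\mathcal{O}(\eta^3)$, not $\mathcal{O}(\eta^4)$, unless the cubic term is displayed alongside the quadratic one.
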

\begin{proof}
See Appendix D.
\end{proof}

We are mostly interested in penalizing the added complexity of the more flexible non-Gaussian model in a neighborhood around the base Gaussian model, and as suggested by \cite{simpson2017penalising}, a Taylor expansion around the base model is done, and the second order expansion is used as the measure of added complexity.

\begin{corollary}
The distance measure is $d(\eta) = \sqrt{2 KLD(\eta)} \propto \eta$ and so the PC prior for $\eta$ and $\eta^\star$ will follow an exponential distribution with some rate parameter $\theta_\eta$. We note that this PC prior induces a LASSO (L-1) style penalty on the parameter $\eta$ (and $\eta^\star$), since $\log \pi(\eta) = -\theta_\eta \eta + \text{const}$.  
\end{corollary}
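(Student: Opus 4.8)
The plan is to derive the corollary as an essentially immediate consequence of Theorem \ref{theo:1} together with the generic PC-prior recipe $\pi(\eta)=\theta|d'(\eta)|\exp(-\theta d(\eta))$ recalled at the start of this section. The only analytic input required is the quadratic behaviour of the KLD near the base model supplied by the theorem; everything else is a change of variables in that recipe and reading off the resulting density.

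First I would substitute the expansion from Theorem \ref{theo:1} into the definition of the unidirectional distance. Following \citep{simpson2017penalising}, I replace the KLD by its second-order Taylor surrogate around $\eta=0$, i.e. $KLD(\eta)\approx \tfrac{3}{16}\big(\sum_{i=1}^n h_i^{-2}\big)\eta^2$, and compute
$$
d(\eta)=\sqrt{2KLD(\eta)}=\sqrt{\tfrac{3}{8}\sum_{i=1}^n h_i^{-2}}\;\eta \;=: c\,\eta .
$$
Here I would note that $c>0$ because $h_i>0$, so $d$ is nonnegative, vanishes exactly at the base model, and is strictly increasing in $\eta$; hence it is a legitimate unidirectional distance and $d(\eta)\propto\eta$. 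Next I would insert $d(\eta)=c\eta$ and $d'(\eta)=c$ into the PC-prior formula, obtaining $\pi(\eta)=\theta c\exp(-\theta c\,\eta)$; setting $\theta_\eta:=\theta c$ gives exactly the $\text{Exp}(\theta_\eta)$ density. Taking logarithms then yields $\log\pi(\eta)=-\theta_\eta\eta+\text{const}$, the advertised L-1 (LASSO-type) penalty. Finally, to transfer the statement from $\eta$ to $\eta^\star$ I would invoke the observation made just before the theorem that the parameterizations $\{\sigma,\eta,\mu=0\}$ and $\{\sigma,\eta^\star,\mu^\star=0\}$ coincide in the symmetric case, so the identical prior is assigned to $\eta^\star$.

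The step that actually deserves care is not the algebra but the precise sense of the symbol $\propto$. Exactly, $\sqrt{2KLD(\eta)}=c\,\eta\sqrt{1+\mathcal{O}(\eta^2)}$, so proportionality is not literal once the $\mathcal{O}(\eta^4)$ terms of the KLD are retained; it holds for the \emph{second-order surrogate} that the PC-prior construction prescribes as the complexity measure. I would therefore state explicitly that we adopt the quadratic approximation of the KLD as $d(\eta)^2/2$, justified because the prior is designed to control complexity in a neighbourhood of the base model, after which the exponential form and the linear log-penalty are exact for this surrogate. The remaining bookkeeping (positivity of the constant, well-definedness of the change of variables, and the equivalence of the two symmetric parameterizations) is routine, so I expect the substantive content of the argument to be this clarification of the approximation underlying $d(\eta)\propto\eta$.
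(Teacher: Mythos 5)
Your argument is correct and follows essentially the same route as the paper: take the second-order Taylor surrogate of the KLD from Theorem \ref{theo:1} as the complexity measure (as prescribed by \citep{simpson2017penalising}), obtain $d(\eta)=\sqrt{2KLD(\eta)}\propto\eta$, read off the exponential density from the PC-prior recipe, and transfer the prior to $\eta^\star$ via the equivalence of the symmetric parameterizations noted just before the theorem. Your explicit remark that the proportionality is exact only for the quadratic surrogate (not for the full KLD with its $\mathcal{O}(\eta^4)$ terms) is a fair clarification of a point the paper leaves implicit, but it does not change the argument.
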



\subsection{PC prior distribution for the second flexibility parameter} \label{Sect:PCpriormu}

We derive in this subsection the PC prior for the second flexibility parameter. As seen in section \ref{sect:varpar}, the impact of the parameter $\zeta$ on the the NIG and GAL distributions depend on the value of $\eta$ (if $\eta=0$, then $\zeta$ has no impact).  Therefore, we derive the PC prior for $\zeta$ conditionally on $\eta$, and consider that the base model is a non-Gaussian model driven by symmetric noise $\mathbf{\Lambda}^{Sym}$ (with parameters $\eta$ and $\zeta=0$), and the flexible model $\mathbf{x}$ is driven by asymmetric noise $\mathbf{\Lambda}$ (with parameters $\eta$ and $\zeta$). In Fig. \ref{fig:kld} (b), we show the KLD between the NIG noises $\Lambda_i$ and $\Lambda_i^{Sym}$ for several values of $\eta$ and $\zeta$. 


A Taylor expansion around $\eta=0$ and $\zeta=0$ yields a quadratic dependency with $\eta$ which is only accurate when $\eta$ is very close to 0: $$KLD(\Lambda_i \ || \ \Lambda_i^{Sym}) = ( 3\eta^2/(4h_i) + \mathcal{O}(\eta^3) )\zeta^2 +  \mathcal{O}(\zeta^4).$$ Therefore, we use the following upper bound of the KLD as a measure of added complexity, which provides a more reasonable approximation. 

\begin{theorem}\label{theo:2}
Under the same conditions of Theorem \ref{theo:1}, we have for NIG driving noise
$$KLD( \ \pi(\mathbf{x}|\eta, \zeta) \ || \ \pi^{Sym}(\mathbf{x}|\eta, \zeta=0) \ ) \leq \frac{n}{2}\eta\zeta^2. $$
This inequality also holds for GAL driving noise when $\eta < \min_{i=1,\dotsc,n} h_i$.
\end{theorem}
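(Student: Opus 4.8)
The plan is to strip away the matrix $\mathbf{D}$ first. Both the flexible law $\pi(\mathbf{x}|\eta,\mu)$ and the base law $\pi^B(\mathbf{x}|\eta,\mu=0)$ are pushforwards of their respective noise vectors under the \emph{same} invertible map $\mathbf{x}=\mathbf{D}^{-1}\boldsymbol{\Lambda}$, so the invariance of the KLD exploited in eq.~\eqref{eq:KLDinv} applies verbatim (the common $\mathbf{D}$ cancels), giving
$$KLD\big(\pi(\mathbf{x}|\eta,\mu)\,\|\,\pi^B(\mathbf{x}|\eta,\mu{=}0)\big)=\sum_{i=1}^n KLD(\Lambda_i\,\|\,\Lambda_i^B),$$
where $\Lambda_i$ and $\Lambda_i^B$ are the univariate asymmetric and symmetric noises sharing $\sigma$, $\eta$ and $h_i$. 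It therefore suffices to prove the per-coordinate bound $KLD(\Lambda_i\,\|\,\Lambda_i^B)\le\tfrac12\eta\mu^2$.

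For each coordinate I would condition on the common mixing variable $V_i$. Since the two noises are normal variance--mean mixtures of eq.~\eqref{eq:noise21} driven by the \emph{same} mixing law, the chain rule together with the data-processing inequality gives
$$KLD(\Lambda_i\,\|\,\Lambda_i^B)\le E_{V_i}\big[KLD(\Lambda_i|V_i\,\|\,\Lambda_i^B|V_i)\big],$$
because the $V_i$-marginals coincide and marginalising $V_i$ out can only shrink the divergence. The conditionals are Gaussian, so the inner divergence is explicit; with $c=1+\eta\mu^2$ one has $\Lambda_i|V_i\sim\mathrm{N}\big(\tfrac{\sigma\mu}{\sqrt c}(V_i-h_i),\tfrac{\sigma^2}{c}V_i\big)$ and $\Lambda_i^B|V_i\sim\mathrm{N}(0,\sigma^2V_i)$, and the univariate Gaussian KLD formula yields
$$E_{V_i}\big[KLD(\Lambda_i|V_i\,\|\,\Lambda_i^B|V_i)\big]=\tfrac12\log c+\tfrac1{2c}-\tfrac12+\tfrac{\mu^2}{2c}\,E_{V_i}\!\Big[\tfrac{(V_i-h_i)^2}{V_i}\Big].$$
Everything now reduces to the single moment $E_{V_i}[(V_i-h_i)^2/V_i]=E[V_i]-2h_i+h_i^2E[1/V_i]$. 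For NIG noise $V_i$ is inverse Gaussian with $E[V_i]=h_i$ and $E[1/V_i]=1/h_i+\eta/h_i^2$, so this moment equals exactly $\eta$; substituting and using $\tfrac{\eta\mu^2}{2c}=\tfrac12-\tfrac1{2c}$ collapses the right-hand side to the clean value $\tfrac12\log(1+\eta\mu^2)$. The elementary inequality $\log(1+t)\le t$ then gives $KLD(\Lambda_i\,\|\,\Lambda_i^B)\le\tfrac12\eta\mu^2$, and summing over $i$ settles the NIG case for every $\eta$.

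The GAL case is where the real difficulty sits. Here $V_i\sim\mathrm{Gamma}(h_i/\eta,1/\eta)$, and its reciprocal moment $E[1/V_i]=1/(h_i-\eta)$ is finite exactly when the shape exceeds one, i.e. $\eta<h_i$; this is precisely the source of the hypothesis $\eta<\min_i h_i$, which I would invoke to keep every term well-defined. The moment now equals $h_i\eta/(h_i-\eta)$, strictly larger than the NIG value $\eta$, so the conditioning bound evaluates to $\tfrac12\log(1+\eta\mu^2)+\tfrac{\eta^2\mu^2}{2(h_i-\eta)(1+\eta\mu^2)}$, whose leading coefficient in $\mu^2$ is $\tfrac{\eta h_i}{2(h_i-\eta)}>\tfrac\eta2$. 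Thus the conditioning step alone \emph{overshoots} the target $\tfrac12\eta\mu^2$ for small $\mu$ and cannot close the GAL bound. I expect this to be the main obstacle, and I would resolve it by working with the true marginal divergence rather than its Jensen relaxation: the exact GAL KLD is genuinely smaller---its leading coefficient is only $3\eta^2/(4h_i)$, per the Taylor expansion preceding the theorem---so one must either estimate it directly from the closed-form GAL densities (controlling the Bessel-function factors $K_{\lambda-1/2}$), or lower-bound the information gap $E_{V_i}[KLD(\Lambda_i|V_i\,\|\,\Lambda_i^B|V_i)]-KLD(\Lambda_i\,\|\,\Lambda_i^B)=E_{\Lambda_i}\big[KLD(\pi(V_i|\Lambda_i)\,\|\,\pi^B(V_i|\Lambda_i))\big]$ lost in marginalising. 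Taming these Bessel terms under $\eta<h_i$ is the crux of the argument.
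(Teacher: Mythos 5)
Your NIG argument is correct and is essentially the paper's own proof in per-coordinate form: reduce to $\sum_i KLD(\Lambda_i\|\Lambda_i^B)$ by invariance under the common map $\mathbf{D}^{-1}$, bound each marginal divergence by the joint divergence of $(\Lambda_i,V_i)$ via the chain rule (the mixing marginals coincide, so only the expected conditional Gaussian term survives), evaluate $E\big[(V_i-h_i)^2/V_i\big]=\eta$ for the inverse Gaussian law to get exactly $\tfrac12\log(1+\eta\mu^2)$ per coordinate, and finish with $\log(1+t)\le t$. This half of the theorem is fully established and matches Appendix \ref{app:theo2}.

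The genuine gap is the GAL case, which you diagnose but do not prove: you correctly compute that the same conditioning bound evaluates per coordinate to $\tfrac12\log(1+\eta\mu^2)+\tfrac{\eta^2\mu^2}{2(h_i-\eta)(1+\eta\mu^2)}$, whose small-$\mu$ coefficient $\tfrac{\eta h_i}{2(h_i-\eta)}$ strictly exceeds $\tfrac{\eta}{2}$, so this route cannot yield the stated bound; you then only sketch two possible repairs (direct control of the GAL marginal densities, or lower-bounding the information lost in marginalising $V_i$) without carrying either out. That said, your diagnosis is not a failure to spot the paper's trick --- it exposes a real flaw in the paper's own argument. The appendix asserts that for Gamma mixing the expectation of eq. \eqref{eq:KLDsimple} is bounded above by $\tfrac n2\log(1+\eta\mu^2)$, but since $E\big[V_i-2h_i+h_i^2/V_i\big]=h_i\eta/(h_i-\eta)>\eta$, that expectation equals $\tfrac n2\log(1+\eta\mu^2)$ \emph{plus} the strictly positive term $\tfrac{\eta^2\mu^2}{2(1+\eta\mu^2)}\sum_i(h_i-\eta)^{-1}$, so the claimed inequality runs in the wrong direction and the joint-KLD surrogate itself violates the target bound for small $\mu\neq0$. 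Either the GAL statement needs a different proof (working with the exact marginal divergence, whose local behaviour $\big(3\eta^2/(4h_i)+\mathcal{O}(\eta^3)\big)\mu^2$ is indeed smaller), or the bound used as the complexity measure for GAL noise should be restated with the larger constant $\tfrac12\sum_i h_i\eta/(h_i-\eta)$ in place of $\tfrac n2\eta$. As submitted, your proposal proves the NIG half only.
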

\begin{proof}
See Appendix D.
\end{proof}

If we use the previous upper bound as a measure of complexity, the distance $d(\zeta)$ is  $\sqrt{2 KLD(\zeta)} \propto \sqrt{\eta}|\zeta|$ and by setting an exponential distribution on $d(\zeta)$ with rate parameter $\theta_{\zeta}$, the density of $\zeta|\eta$ is found to be
\begin{equation}\label{eq:mudensity}
 \pi(\zeta|\eta) = \frac{1}{2} \theta_\zeta \sqrt{\eta} \exp\left(-\theta_\zeta \sqrt{\eta} |\zeta| \right).
\end{equation}
\begin{corollary}
With the standardized and orthogonal parameterization $(\eta^\star, \zeta^\star)$, the skewness parameter is $\zeta^\star = \sqrt{\eta} \zeta$ and so $KLD \leq n{\zeta^\star}^2/2$. We then have $d(\zeta) \propto |\zeta^\star|$ and the PC prior for $\zeta^\star$ is a Laplace distribution with rate parameter $\theta_\zeta$. This prior also acts as a LASSO (L-1) penalty on the estimation of $\zeta^\star$, since $\log \pi(\zeta^\star)  = -\theta_\zeta |\zeta^\star| + \text{const}$.
\end{corollary}

\subsection{Calibration of the PC priors} \label{sect:scalingeta}

In line with the \emph{weakly informative} prior framework of \cite{simpson2017penalising} and \cite{gelman2017prior}, the calibration of the PC priors, i.e, the choice of $\theta_\eta$ and $\theta_\zeta$, is based on the user defining the prior probabilities $P(\eta^\star>U_\eta) = \alpha_\eta$ and $P(|\zeta^\star|>U_\zeta) = \alpha_\zeta$ and in turn the calibration parameters are $\theta_\eta = -\log(\alpha_\eta)/U_\eta$ and \mbox{$\theta_\zeta = -\log(\alpha_\zeta)/U_\zeta$}. Of course, sensible choices for the upper-tail values $U$ and probabilities $\alpha$ must come from an understating of how different values of $\eta^\star$ and $\zeta^\star$ impact the process. This understanding can be informed by, for instance, plotting sample paths of the process for increasing values of $\eta^\star$ (see Fig. \ref{fig:compare1}), stopping when we observe unreasonably large spikes or jumps, at $\eta^\star=U_\eta$, and then setting a low probability that $\eta_\star>U_\eta$, say equal to 0.01. A similar procedure can be applied when calibrating the PC prior of $\zeta^\star$, but one should look for asymmetries in the number of large jumps or spikes, for instance for positive $\zeta^\star$, there should be more positive jumps than negative jumps.

More interpretable distribution features can be used in the calibration, such as the probability that large marginal events occur (larger than 3 times the marginal standard deviation, for instance) which increase with $\eta$. However, deriving these probabilities for new models can be unhandy, since the marginal distributions do not have a closed-form PDFs or CDFs. Nevertheless, we pursue this path in Appendix E to calibrate the PC prior of $\eta$ for Matérn and OU processes.

\subsection{Comparison with other prior distributions}

For the GHSST distribution (subclass of the GH distribution) seen in subsection \ref{sect:litreview}, the flexibility parameters are $\nu$ and $\beta$, the base model corresponds to $\nu\to\infty$ and $\beta=0$ and gamma and normal priors are commonly chosen, respectively. In our parameterization, this would suggest an inverse gamma distribution for $\eta$ since the base model is at $\eta \to 0$, and a normal prior for $\zeta$. An inverse gamma prior decays slower to 0 as $\eta$ increases (see Fig. \ref{fig:sim5}), and it has no mass at the base model ($\eta=0$), so it acts as a repellent from the simpler Gaussian model, having the opposite effect that we necessitate. Also, a Gaussian prior for $\zeta$ may not achieve as much contraction as the prior in eq. \eqref{eq:mudensity}, which follows a Laplace distribution for a fixed value of $\eta$, and it does not take into account that the asymmetry induced by a particular value of $\zeta$ increases with $\eta$. 


The Fisher information matrix does not seem to be available in closed form which makes the use of the Jeffreys priors for the flexibility parameters impractical for the class of models defined by eq. \eqref{eq:frame}. Fig. \ref{fig:sim5} shows a numerical approximation of the Jeffreys prior density for $\eta$, based on the univariate pdf of the NIG distribution. It has the mode at the base model, but unlike the PC priors, which decay exponentially, it is almost flat. 

\section{Many models, one framework} \label{sect:manymodels}

Our general model in eq. \eqref{eq:frame} contains a wide variety of non-Gaussian models as special cases that allow departures from Gaussianity within realizations. Here, we review some non-Gaussian models defined in discrete and continuous space and present a set of functions for $Stan$ that allow an easy and efficient implementation of these models.

\subsection{Models defined on discrete space}\label{sect:discreteapp}

A Gaussian autoregressive process (AR) of order 1 assumes that $\sqrt{1-\rho^2}x_1$ and the differences $\{x_{i+1}- \rho x_{i}, \ i > 1\}$ follow independent Gaussian white noise $\text{N}(0,\sigma^2)$ and $|\rho|<1$. If we assume that the differences instead are non-Gaussian white noise, we obtain the linear system $\mathbf{D}_{AR1} \mathbf{x}=\boldsymbol{\Lambda}$, where $\mathbf{D}_{AR1}= (d_{i,j}) \in \mathbb{R}^{n\times n}$ with  $d_{1,1}=\sqrt{1-\rho^2}$, $d_{i,i-1}=-\rho, d_{i,i}=1$ for $i>1$, and the other matrix elements are 0. Likewise, the matrix $\mathbf{D}$ for higher-order AR processes is a lower triangular Toeplitz matrix containing the autoregression coefficients. A vector of autoregressive processes with no intercept also has a representation $\mathbf{D}_{VAR} \mathbf{x}=\boldsymbol{\Lambda}$ when extending the model to non-Gaussianity, by stacking the multivariate time series into a single vector $\mathbf{x}$. To fit areal data \citet{walder2020bayesian} proposed a simultaneously autoregressive (SAR) model driven by non-Gaussian noise that also can be represented by eq. \eqref{eq:frame} for an appropriately specified matrix $\mathbf{D}$.

\subsection{Models defined on continuous space} \label{sect:continuousapp}

A famous class of processes in spatial statistics are stationary Gaussian processes with Matérn covariance functions \citep{MR0169346}. Gaussian processes with this covariance function can be obtained as solutions of the SPDE:
\begin{equation}\label{eq:SPDE}
(\kappa^2 - \Delta)^{\alpha/2} X(\mathbf{s}) = \sigma \mathcal{W}(\mathbf{s}), \ \ \mathbf{s} \in \mathbb{R}^d,
\end{equation}
where $\kappa$ is a spatial range parameter, $\Delta=\sum_i \partial^2/\partial x_i^2$ is the Laplace operator, $\alpha>d/2$ is a smoothness parameter, and $\mathcal{W}(\mathbf{s})$ is a Gaussian white noise process \citep{Whittle}.  The approximation to discrete space in \citep{lindgren2011explicit} uses the finite element method (FEM) to the stochastic weak formulation of the previous SPDE. It begins by expressing the process $X(\mathbf{s})$ as a weighted sum of basis functions, $X(\mathbf{s}) = \sum_{i=1}^{n}w_i\psi_i(\mathbf{s})$, and it was shown that the weights  $\mathbf{w}=[w_1,\dotsc,w_n]^T$ follow the system $\mathbf{D}\mathbf{w}=\mathbf{Z}$, where the predefined vector $\mathbf{h}$ of section \ref{sect:framework} has elements $h_i=\int_{\mathbb{R}^d}\psi(\mathbf{•}{s})d\mathbf{s}$.  The random field $X(\mathbf{s})$ evaluated at locations $\mathbf{s}_1, \mathbf{s}_2, \dotsc$ composes the vector $\mathbf{x}= [X(\mathbf{s}_1), X(\mathbf{s}_2), \dotsc]^T$ and it is given by the linear combination $\mathbf{x}= \mathbf{A}\mathbf{w}$, where $\mathbf{A}$ is the projector matrix with elements $A_{ij}=\psi_i(\mathbf{s}_j)$. Building the matrices $\mathbf{D}$ and $\mathbf{A}$ may seem hard at first since it involves the finite element method, but these can be easily built with the functions \emph{inla.mesh.2D}, \emph{inla.mesh.fem} and \emph{inla.spde.make.A} in the \emph{R} package \emph{INLA}, and the user only needs to supply the location of the observations and some tuning parameters for the discretization mesh \citep{bakka2018spatial}.

 \citet{bolin2014spatial} extended the previous results to Type-G Matérn random fields by replacing the Gaussian noise process $\mathcal{W}(\mathbf{s})$ with a non-Gaussian noise process $\dot{\boldsymbol{\Lambda}}(\mathbf{s})$. As discussed in \citep{wallin2015geostatistical}, for SPDE models, the increments need to be closed under convolution, and so we can only consider the NIG and GAL subclasses of the GH distribution. It was shown that the stochastic weights now follow the system $\mathbf{D}\mathbf{w}=\mathbf{\Lambda}$, where the matrix $\mathbf{D}$ is the same as in the Gaussian case seen before.

Table \ref{tab:differentialD} lists differential operators associated with several stochastic processes and Fig. \ref{fig:compare1} contains their sample paths, obtained by the SPDE approach, considering a Gaussian noise process (left panel) and a NIG noise process (right panel). The characterization of the marginal distributions is in Appendix C.  Matérn and Ornstein–Uhlenbeck (OU) processes are widely used in applications. The continuous random walk model of order 2 (CRW2) is also known as an integrated Brownian motion process, and it has been used by \cite{diggle2014}, and \cite{zhuintrw}.  Whenever the noise takes an extreme value (for instance, near location 0.8), the CRW1 and OU processes will exhibit a distinct jump, and the RW2 and Matérn processes will exhibit a kink (discontinuity in the first derivative).



\begin{table}
    \centering
    \begin{tabular}{c|cccc}
    \hline
Model & CRW1 & OU & CRW2 & Matérn $\alpha=2$ \\  
$\mathcal{D}$ & $\partial_t$ & $\kappa + \partial_t$ & $\partial_t^2$  & $\kappa^2- \partial_t^2$ \\ \hline
    \end{tabular}
    \caption{Differential operators associated with different models.}
    \label{tab:differentialD}
\end{table}



\begin{figure}[h]
    \centering
 \includegraphics[width=\linewidth]{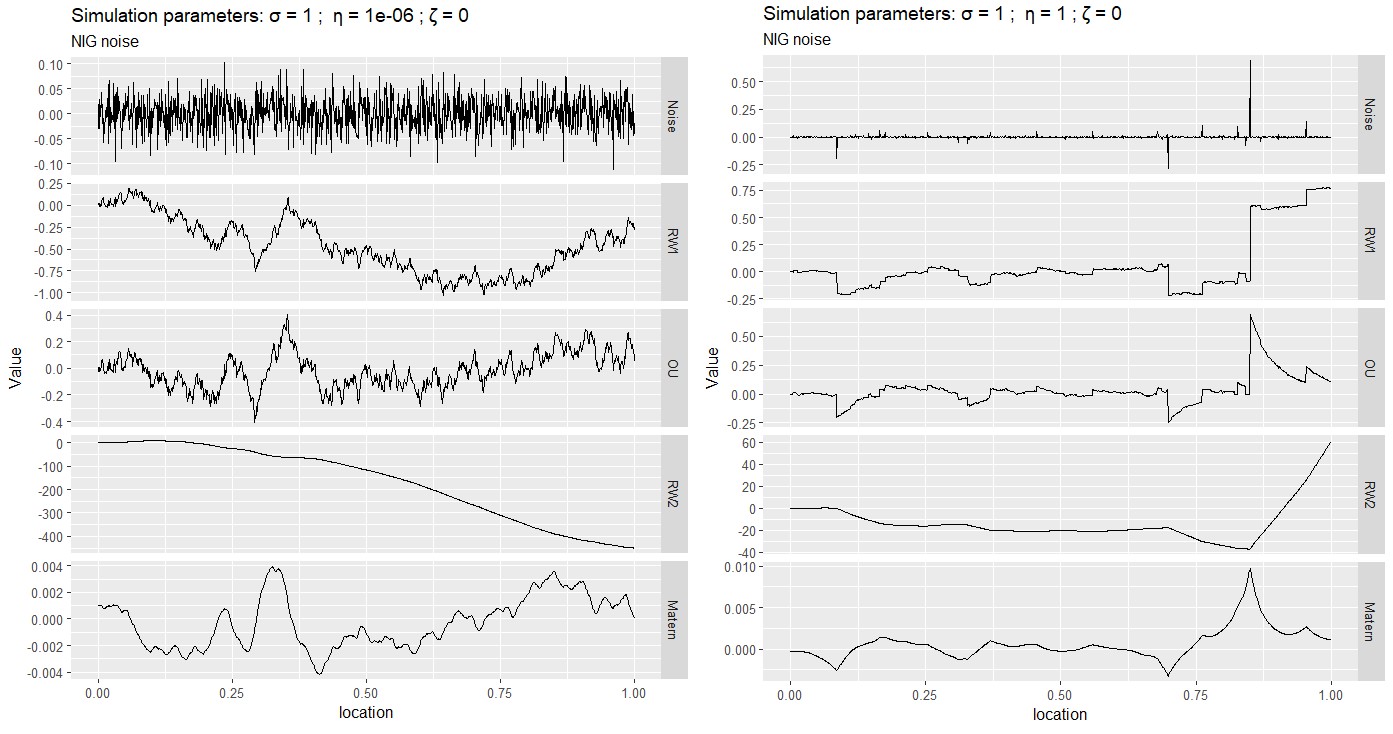} 
  \caption{Noise and sample paths of several models for $\eta=10^{-6}$ (left) and $\eta=1$ (right), for $\sigma=1$ and $\zeta=0$.}
  \label{fig:compare1}
\end{figure}



\subsection{Implementation in Stan}

Estimating non-Gaussian models defined by $\mathbf{D}\mathbf{x}=\mathbf{\Lambda}$ in the $Stan$ platform \citep{stan} can be done by declaring the random vectors $\mathbf{x}$ and $\mathbf{V}$ with the hierarchical representation of eq. \eqref{eq:model}. However, the dimension of $\mathbf{V}$ is the same as the dimension of $\mathbf{x}$, which can be very large in some applications, and since $\mathbf{V}$ needs to be estimated, one can expect long sampling times. We can integrate out the mixing variables $V_i$ in eq. \eqref{eq:model} to reduce the dimension of the parameter space being explored in $Stan$, which can lead to a significant speedup. Note that $\mathbf{x} = \mathbf{D}^{-1}\mathbf{\Lambda}$ and if $\mathbf{D}$ is non-singular, then the multivariate transformation method yields:
\begin{equation} \label{eq:Vintegrated}
    \pi(\mathbf{x})= |\mathbf{D}|\prod_{i=1}^n\pi_{\Lambda_i}([\mathbf{D}\mathbf{x}]_i),
\end{equation}
where $\pi_{\Lambda_i}$ is the PDF of a NIG or GAL distribution. 

The $Stan$ function \emph{nig\_model} returns the log-likelihood of a NIG model based on eq. \eqref{eq:Vintegrated} using the standardized and orthogonal parameterization and it is implemented in \url{github.com/rafaelcabral96/nigstan}. The declaration of $\mathbf{x}$ takes the following form:
\begin{minipage}{\linewidth}
\vspace{0.2cm}
\begin{verbatim}
x     ~ nig_model(D, etas, zetas, h, 1)
etas  ~ exp(theta_eta)
zetas ~ double_exponential(0,1.0/theta_zeta)
\end{verbatim}
\vspace{0.025cm} 
\end{minipage}
where the last argument of \emph{nig\_model} is an integer with value 1 if the log-determinant of $\mathbf{D}$ should be computed (if $\mathbf{D}$ depends on parameters), or 0 otherwise. The last two lines are the declaration of the PC priors for $\eta^{\star}$ and $\zeta^{\star}$.  A location and scale parameter can be added by the usual transformation $m + \sigma\mathbf{x}$. We can fit all models in subsections \ref{sect:discreteapp} and \ref{sect:continuousapp} in $Stan$ using the \emph{nig\_model} function by specifying the appropriate matrix $\mathbf{D}$.  Contrary to the hierarchical representation in eq. \eqref{eq:model}, if we work with eq. \eqref{eq:Vintegrated} there is no need to estimate the auxiliary random vector $\mathbf{V}$ and invert the matrix $\mathbf{D}$, and thus sampling times of hours can be reduced to minutes. The results in the following sections were obtained via this implementation. We demonstrate in more detail how $Stan$ can be used to fit non-Gaussian models for several applications in \url{rafaelcabral96.github.io/nigstan/}. Unfortunately, it is currently not possible to implement models driven by GAL noise in $Stan$ based on eq. \eqref{eq:Vintegrated}, since modified Bessel functions of the second kind with fractional order are currently not available in $Stan$.

\section{Simulation studies} \label{sect:sim}





To study how the priors on the flexibility parameters perform under different conditions, we consider two simulation sets. In the first, we investigate if the posterior distributions of $\eta^\star$ and $\zeta^\star$ are close to the true values of these parameters used to simulate the sample paths. We verify the contraction towards Gaussianity induced by the PC priors and the ability to allow for non-Gaussianity when the latent field is significantly non-Gaussian. 
For the second simulation set, we check how sensitive the posterior distributions of $\eta^\star$ and $\zeta^\star$ are to irregularities in the data for different choices of priors. 

 


\subsection{Implementation details} \label{sect:simdetails}

The following simulation settings were considered. The response is $y_i \sim N(x_i,0.7)$, where $\mathbf{x}$ is non-Gaussian latent field $\mathbf{x}$, defined by eq. \eqref{eq:frame} with mean $m=0$.  The model parameters are $\{\sigma ,\eta^\star, \zeta^\star\}$ and the matrix $\mathbf{D}$ was chosen so that $\mathbf{x}$ corresponds to a Matérn model (with $\alpha=2$ and $\kappa=0.2$) as in subsection \ref{sect:continuousapp}.  We fit the model to simulated data with parameters as shown in Table \ref{table:scenarios}. We considered sample sizes $n$ chosen from $\{50, 100, 500, 1000 \}$ for the first simulation set and $n=500$ for the second simulation set.
 
\begin{table}[b] 
\centering
\begin{tabular}{ccc}
\textbf{Scenarios} & \textbf{Simulation set 1}                                           & \textbf{Simulation set 2}                \\ \hline
Scenario 1         &  $\sigma=1, \eta^\star = \zeta^\star = 0$ (Gaussian)             & Gaussian with no jumps             \\
Scenario 2         & $\sigma=1, \eta^\star = 2, \zeta^\star = 0$ & Gaussian with two jumps of size 25 \\
Scenario 3         & $\sigma=1, \eta^\star = 5, \zeta^\star = 1$           & Gaussian with two jumps of size 50
\end{tabular}
\caption{Parameters of the simulated data for each simulation set and scenario.}
\label{table:scenarios}
\end{table}

    The prior distribution for $\sigma$ is $\text{IGamma}(1,1)$ and we consider several prior configurations for the flexibility parameters $\eta^\star$ and $\zeta^\star$:
  \begin{enumerate*}
    \item[] \textbf{PC1}: $\eta^\star \sim \exp(\theta_\eta=30)$, and $\zeta^\star \sim \text{Laplace}(\theta_\zeta=13)$; 
    \item[] \textbf{PC2}: $\eta^\star \sim \exp(\theta_\eta=2.3)$,  and $\zeta^\star \sim \text{Laplace}(\theta_\zeta=1)$;
    \item[] \textbf{IG1/N1}: $\eta^\star \sim \text{IGamma}(2, 0.1)$,  and $\zeta^\star \sim \text{N}(0, 0.3)$;
    \item[] \textbf{IG2/N2}: $\eta^\star \sim  \text{IGamma}(2, 0.43)$,  and $\zeta^\star \sim \text{N}(0, 1)$; 
    \item[] \textbf{Jeffrey}: $\eta^\star$ follows the Jeffrey prior computed numerically, and $\zeta^\star \sim \text{Uni}(-50,50)$
    \item[] \textbf{Uniform}: $\eta^\star \sim \text{Uni}(0,50)$, and $\zeta^\star \sim \text{Uni}(-50,50)$;  
  \end{enumerate*}

\begin{figure}[]
    \centering
\includegraphics[width=0.49\linewidth]{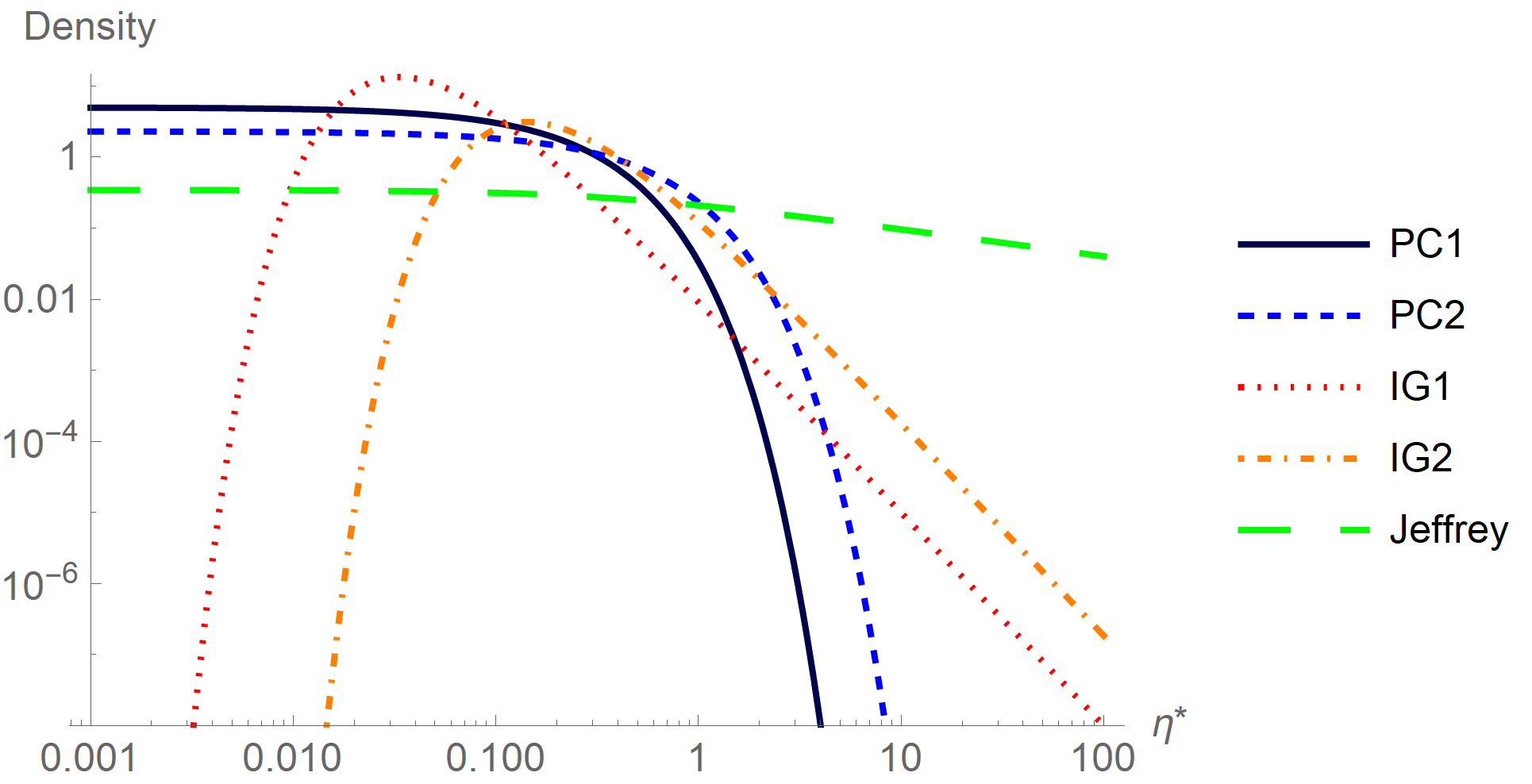}
  \includegraphics[width=0.49\linewidth]{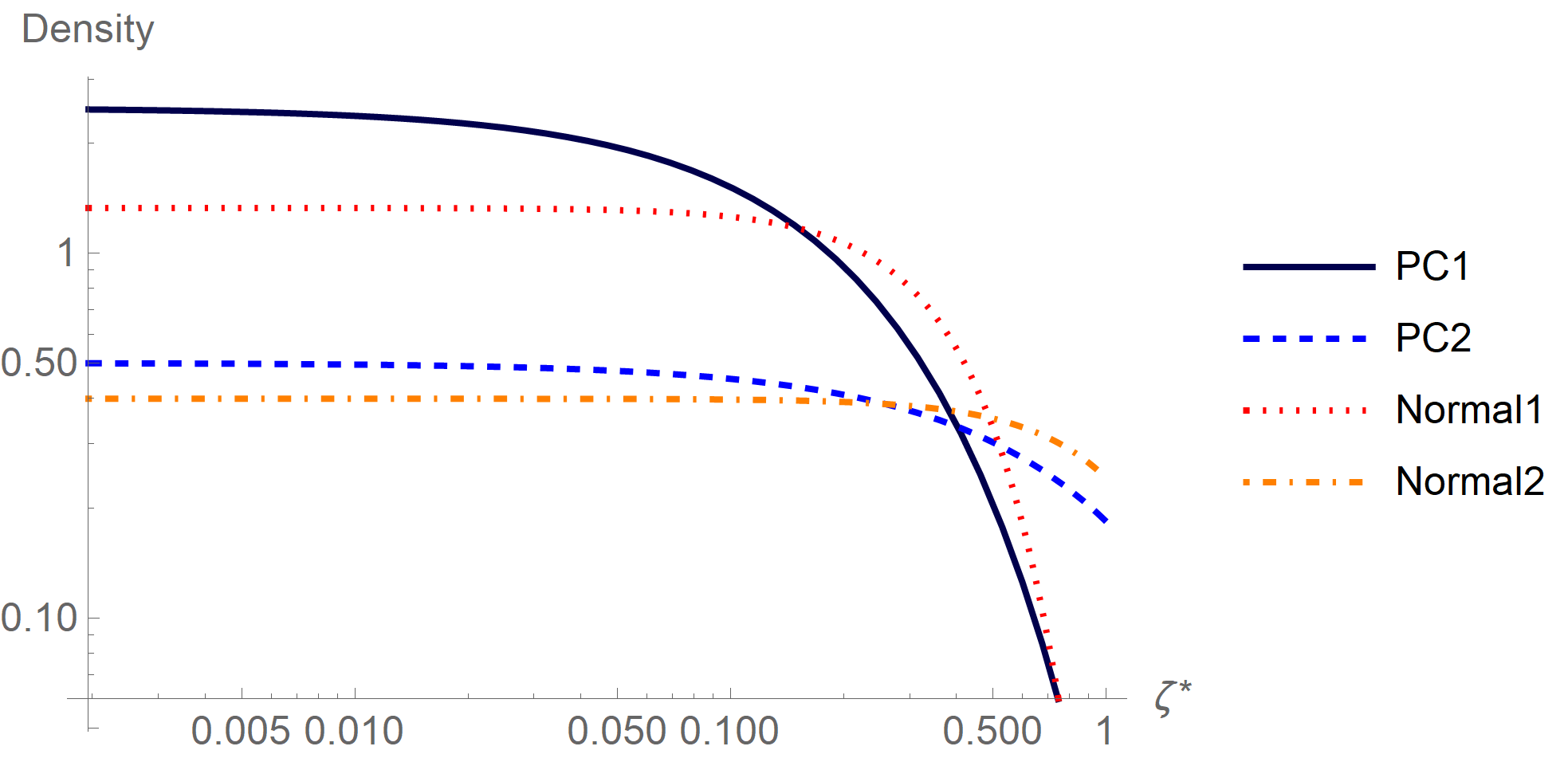}
  \caption{Prior distributions for the two flexibility parameters in log-log scale (we only plot for positive values of $\zeta^\star$).}
  \label{fig:sim5}
\end{figure}

The models were fitted in \emph{Stan} with the \emph{cmdstanr} interface \citep{cmdstanr}, and $N=200$ replications were run for each scenario, prior distribution, and sample size. Each replication consisted of 1000 warmup iterations and 1000 sampling iterations. The IG1 and IG2 priors for $\eta^\star$ were chosen so that they would have the same mean as the PC1 and PC2 priors, and therefore they could potentially achieve the same level of contraction to the Gaussian model. Likewise, the variances of the N1 and N2 priors assigned to $\zeta^\star$ are the same as the variances of the PC1 and PC2 priors for $\zeta^\star$.  These prior distributions are plotted in Fig. \ref{fig:sim5}.

 








\subsection{Results}

Fig. \ref{fig:simsce1} shows the posterior means and widths of the credible intervals for the parameters $\eta^\star$ and $\zeta^\star$.  For brevity, we only show the results of scenarios 1 and 3 and for the sample sizes of 50, 100, and 1000. The remaining figures can be found in Appendix F. For sample sizes up to 500 we see a positive bias in the estimation of $\eta^\star$ when using the uniform or Jeffreys priors, namely we have posterior means larger than 1, indicating a clear departure from Gaussianity in the latent field, where there is none. The posterior means of $\eta^\star$ are smaller when utilizing the PC priors compared to the Jeffrey priors. The reason for this is that near the base model the PC prior can be seen as a tilted Jeffrey priors \citep{simpson2017penalising}, $\pi(\eta^\star)=I(\eta^\star)^{1 / 2} \exp (-\theta_\eta m(\eta^\star))$, where $I(\eta^\star)$ is the Fisher information, and \mbox{$m(\eta^\star)=\int_{0}^{\eta^\star} \sqrt{I(s)} ds$}.

The differences in the posterior means of $\eta^\star$ between the PC priors and the Inverse Gamma priors are not substantial because the IG priors were scaled so they would have the same mean as the PC priors. Although the posterior credible intervals of $\eta^\star$  (based on the 5\% and 95\% quantiles often have smaller widths when utilizing the PC priors, compared to the IG priors, which is partly because the IG priors are more ``spread out" (see Fig. \ref{fig:sim5}). The PC and IG priors for $\eta^\star$  lead to posterior means that are closer to 0, which are more consistent between replications and with smaller widths for the credible intervals). Similar observations apply for the posterior inferences of $\zeta^\star$.

\begin{figure}[]
    \centering
 \includegraphics[width=\linewidth]{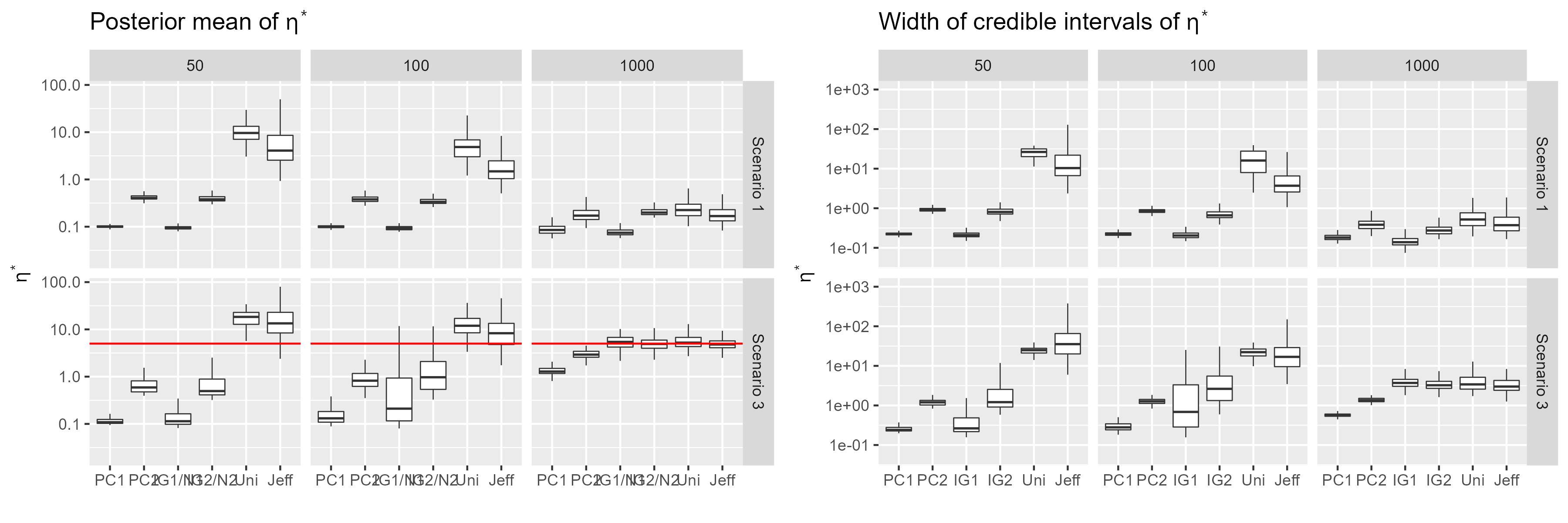}
  \includegraphics[width=\linewidth]{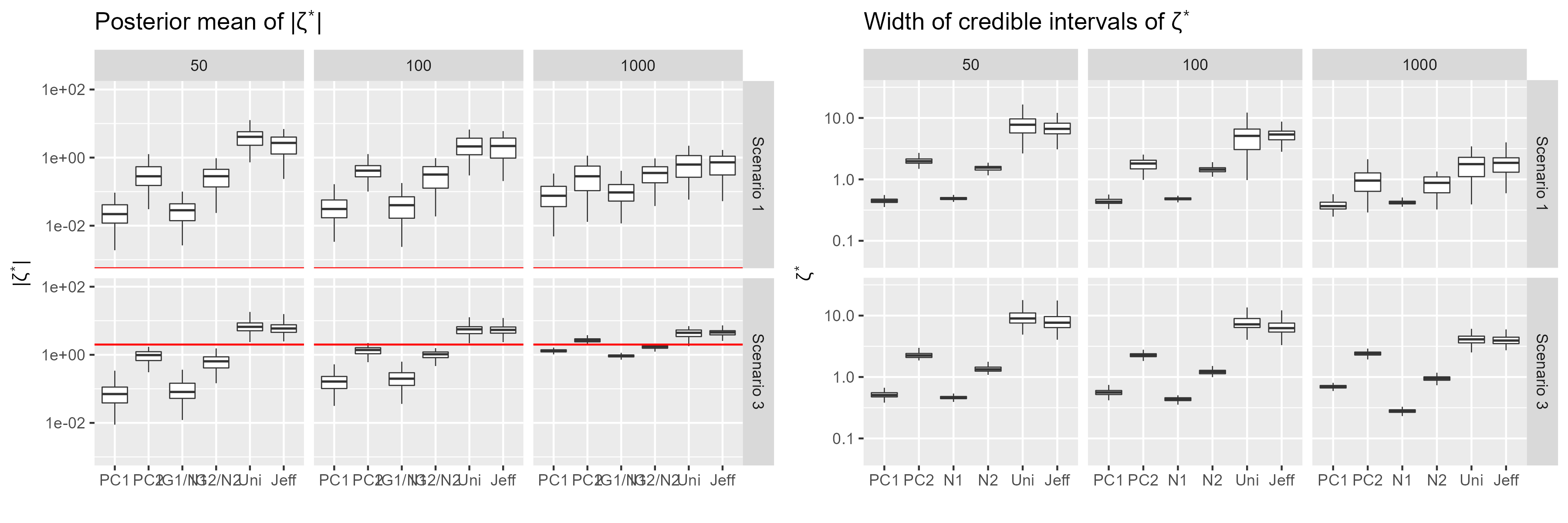}
  \caption{Histograms of the posterior means (top) and widths of the posterior credible intervals (bottom) for $\eta^\star$ (left) and $\zeta^\star$ (right) and for different sample sizes, prior configurations, and scenarios of the simulation set 1.}
  \label{fig:simsce1}
\end{figure}

\subsection{Additional simulation studies}

The results of simulation set 2 can be found in Appendix F. The PC priors led to an estimation that is less sensitive to jumps in the processes than the other priors. Overall, the two simulation studies suggest that the PC priors perform well in a variety of scenarios, leading to a more robust estimation with regards to irregularities in the data and giving preference to the Gaussian model, while at the same time allowing for non-Gaussianity if there is enough support in the data.


\section{Application} \label{sect:appli}

In this section, we illustrate the impact of the PC priors on a geostatistics application and how they achieve the sought contraction towards the Gaussian model when there is not enough convincing evidence in the data of non-Gaussianity.  

\subsection{Dataset and model implementation}

Fig. \ref{fig:app1mesh} shows temperature and pressure measurements at 157 different locations in the North American Pacific Northwest, where the sample mean was subtracted from the data in both cases. \citet{bolin2020multivariate} considered a Gaussian model for the temperature data, while the pressure data appeared to have some localized spikes and short-range variations, which were better captured with a non-Gaussian NIG model. 




\begin{figure}
    \centering
    \includegraphics[width=0.49\linewidth, height = 6cm]{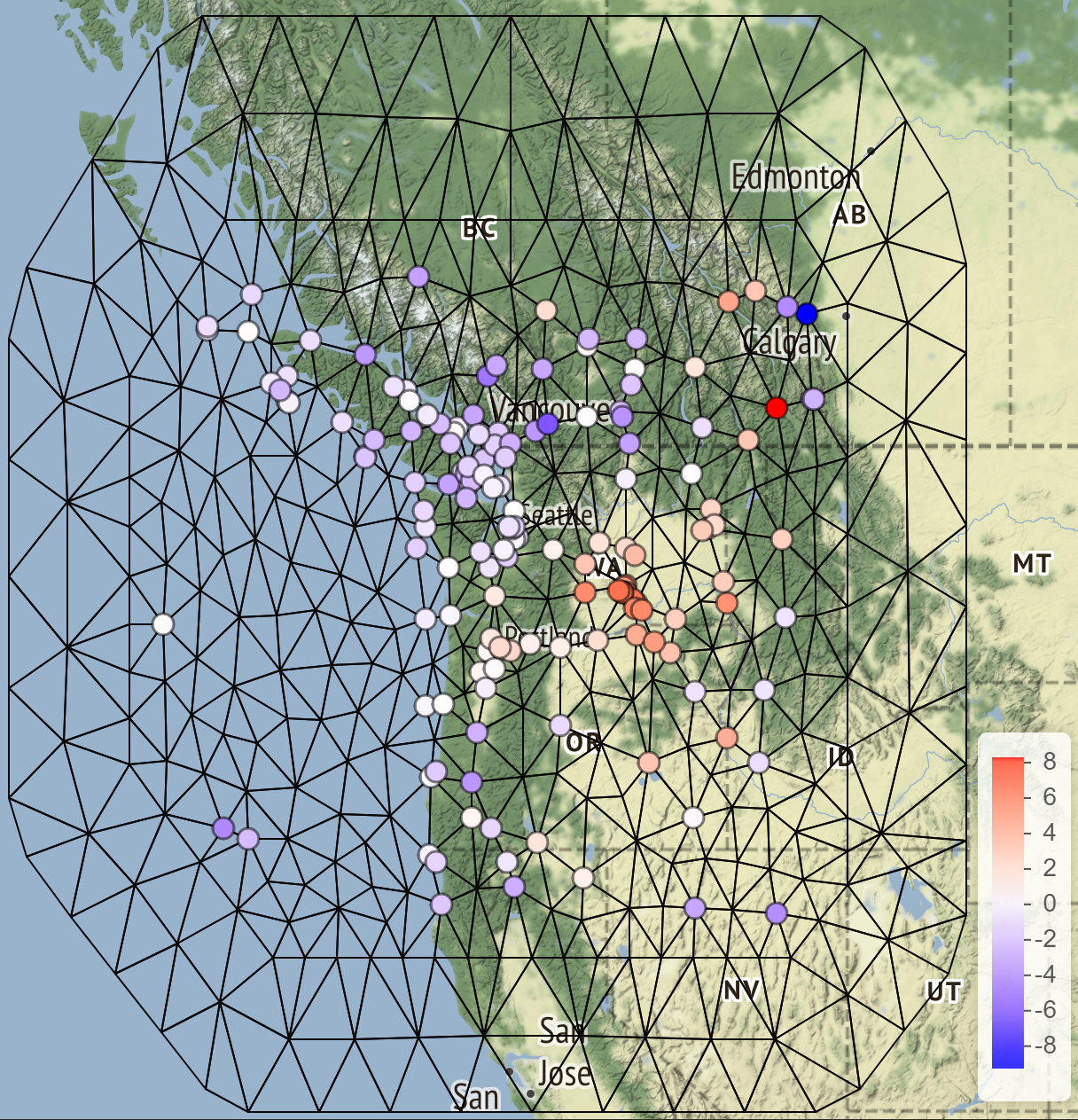}      \includegraphics[width=0.49\linewidth, height = 6cm]{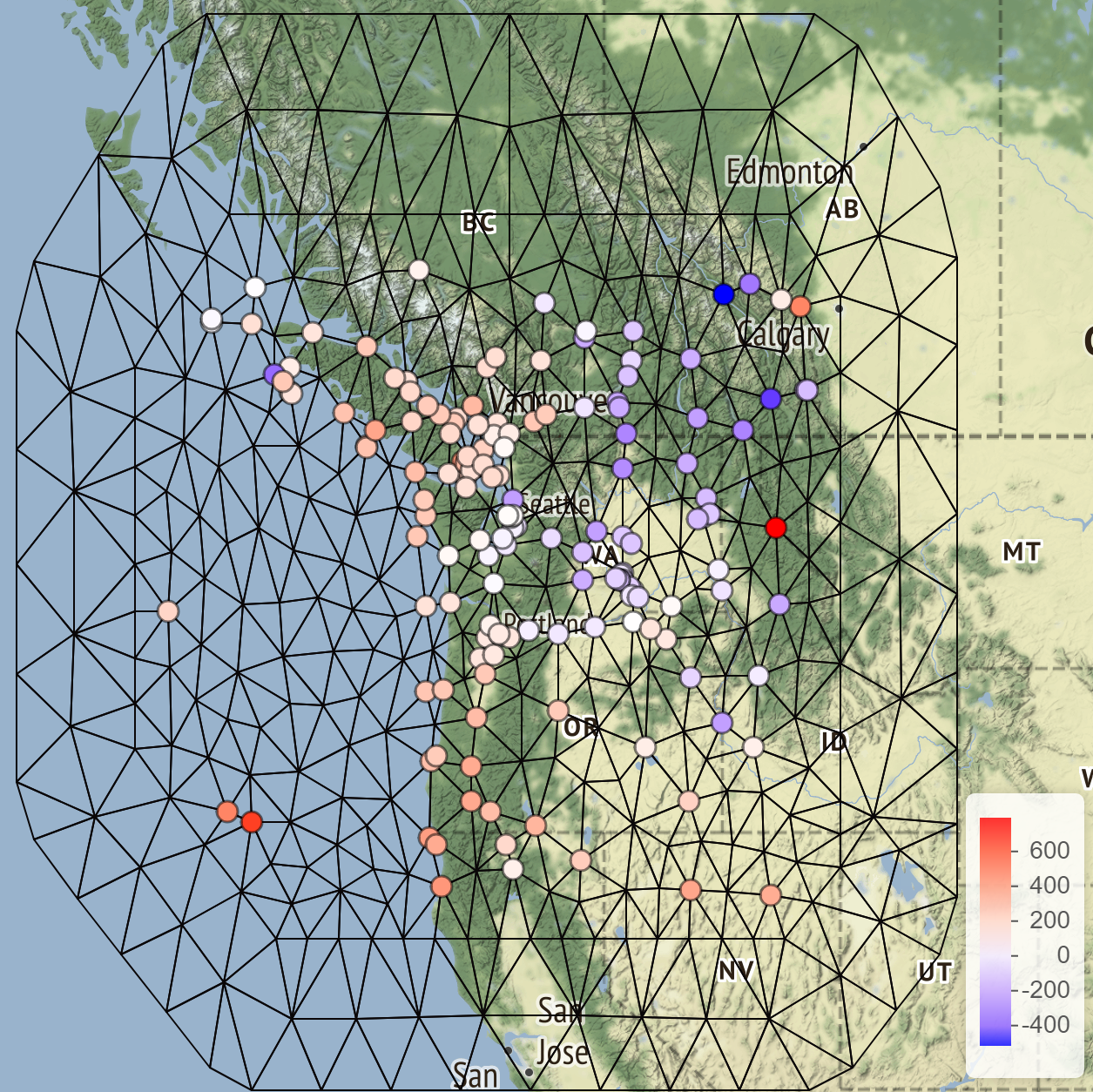}
  \caption{Measurements of temperature (left) and pressure (right) data subtracted by their sample means with the FEM mesh on the background.}
  \label{fig:app1mesh}
\end{figure}

We consider the geostatistical modelling paradigm of \citet{wallin2015geostatistical} where the field of interest $\mathbf{X}(\mathbf{s})$ is observed at $N$ locations $\mathbf{s}_1, \dotsc, \mathbf{s}_N,$ generating observations $y_1, \dotsc, y_N,$ that follow $\mathbf{y} = \mathbf{A}\mathbf{x} + \boldsymbol{\epsilon}$, where $\mathbf{A}$ is the projector matrix. The vector $\boldsymbol{\epsilon}$ is i.i.d. Gaussian noise with variance $\sigma_{\boldsymbol{\epsilon}}^2$ and $\mathbf{x}$ is a non-Gaussian random field ($\mathbf{D}\mathbf{x}=\sigma_{\mathbf{x}}\mathbf{\Lambda}$), where $\mathbf{D}$ is obtained by the FEM approximation on the Matérn SPDE with $\alpha=2$ (see subsection \ref{sect:continuousapp}). We only considered non-Gaussian models with NIG driving noise.  The mesh used for the FEM approximation is shown in Fig. \ref{fig:app1mesh}, and it was built with the $inla.mesh.2d$ function of the \emph{R-INLA} package \citep{rue2007approximate} and consisted of 394 nodes. The models were again implemented in \emph{Stan} with the \emph{cmdstanr} interface, and 4 parallel chains were run with 500 warmup iterations and 1000 sampling iterations.


The PC priors of $\eta^\star$ and $\zeta^\star$ are exponential and Laplace, respectively, and to study the impact of different calibration choices for the priors on the posterior inferences, we assigned two sets of PC priors. The first set was calibrated by choosing $\alpha_\eta= 0.01$ and $P(|\zeta^\star|>0.3 ) = 0.01$, a conservative choice which should lead to a significant contraction, where $\alpha_\eta$ is the likelihood of having twice as much large jumps in the process compared to the Gaussian process (see Appendix E). For the second set we chose $\alpha_\eta= 0.95$ and $P(|\zeta^\star|>4 ) = 0.01$, leading to near-uniform priors.  We refer to each model as  $\mathcal{M}_{PC}$ and $\mathcal{M}_{Unif}$, respectively. There were no warning messages in the $Stan$ program output, and the diagnostics indicated a good mixing, namely a split-$\widehat{R}$ smaller than 1.05 for the model parameters and large effective sample sizes. We also tried uniform priors for $\eta^\star$ and $\zeta^\star$, but the chains did not converge.



\subsection{Estimation results}

Table \ref{table:app1} shows that the posterior means of $\eta^\star$ and $\zeta^\star$ are closer to 0  for the first set of conservative PC priors, compared to the second set of near-uniform priors.  The posterior credible intervals of $\zeta^\star$ include the value 0, so a symmetric model seems adequate for both datasets. The posterior means and standard deviations of the field $X(\mathbf{s})$ are plotted in Fig. \ref{fig:posteriorplots} in a prediction grid consisting of 100000 nodes for the $\mathcal{M}_{PC}$ model. We can observe a smoother field for the temperature data and several localized spikes for the pressure data.



\begin{table}[]
\setlength\tabcolsep{1.5pt} 
\centering
\begin{tabular}{ccccc}
\cline{2-5}
\textbf{}                       & \multicolumn{2}{c}{\textbf{Prior choice 1 (conservative)}} & \multicolumn{2}{c}{\textbf{Prior choice 2 (near-uniform)}} \\ \hline
\textbf{}                       & \textbf{Temperature}       & \textbf{Pressure}             & \textbf{Temperature}       & \textbf{Pressure}             \\ \hline
$\sigma_{\boldsymbol{\epsilon}}$ & 0.79 (0.67, 0.93)          & 63.00 (54.59, 72.46)          & 0.79 (0.68, 0.94)          & 62.52 (54.74, 71.07)          \\
$\sigma_x$                      & 8.25 (6.25, 10.64)         & 336.77 (257.84, 427.72)       & 10.19 (2.09, 7.20)         & 469.34 (325.10, 649.12)       \\
$\kappa$                        & 0.93 (0.60, 1.32)          & 0.37 (0.25, 0.50)             & 1.11 (0.78, 1.51)          & 0.50 (0.35, 0.66)             \\
$\eta^\star$                    & 0.21 (0.02, 0.55)          & 2.04 (0.70, 4.05)             & 4.35 (1.02, 10.70)          & 22.01 (5.91, 48.22)           \\
$\zeta^\star$                     & 0.02 (-0.11, 0.16)         & -0.01 (-0.11, 0.10)           & 0.09 (-0.27, 0.46)         & -0.25 (-0.69, 0.19)           \\ \hline
\end{tabular}
\caption{Posterior means and $95\%$ credible intervals of the  Matérn SPDE model parameters driven by NIG noise for the two datasets and prior choices.\vspace{0.5cm}
 }
\label{table:app1}
\end{table}

\begin{figure}[]
\begin{tabular}{cc}
  \includegraphics[width=0.45\linewidth, height = 4.7cm]{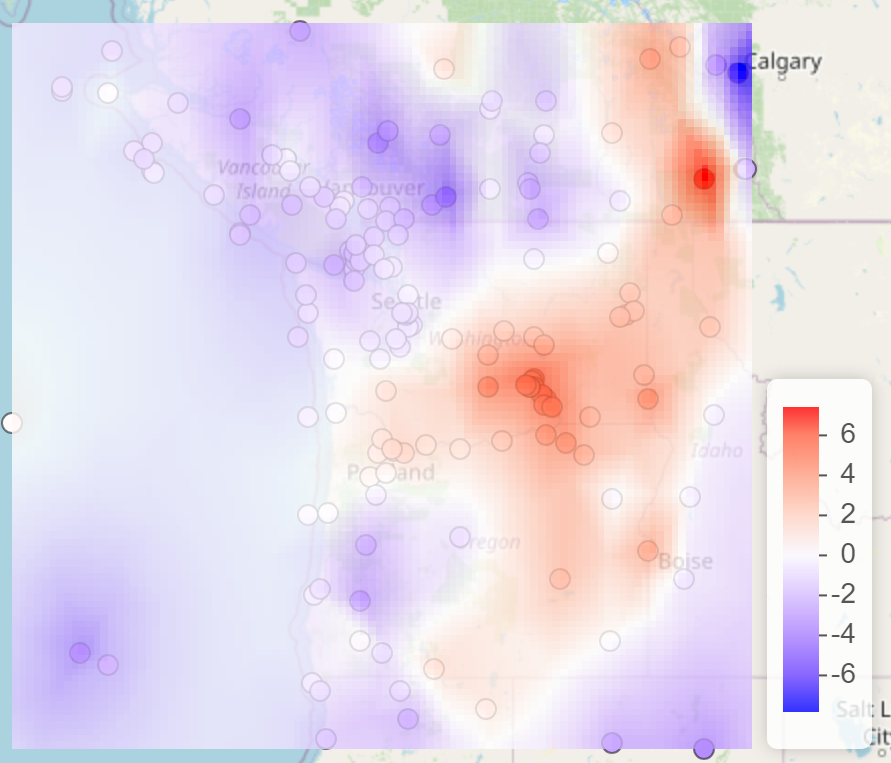} &   \includegraphics[width=0.45\linewidth, height = 4.7cm]{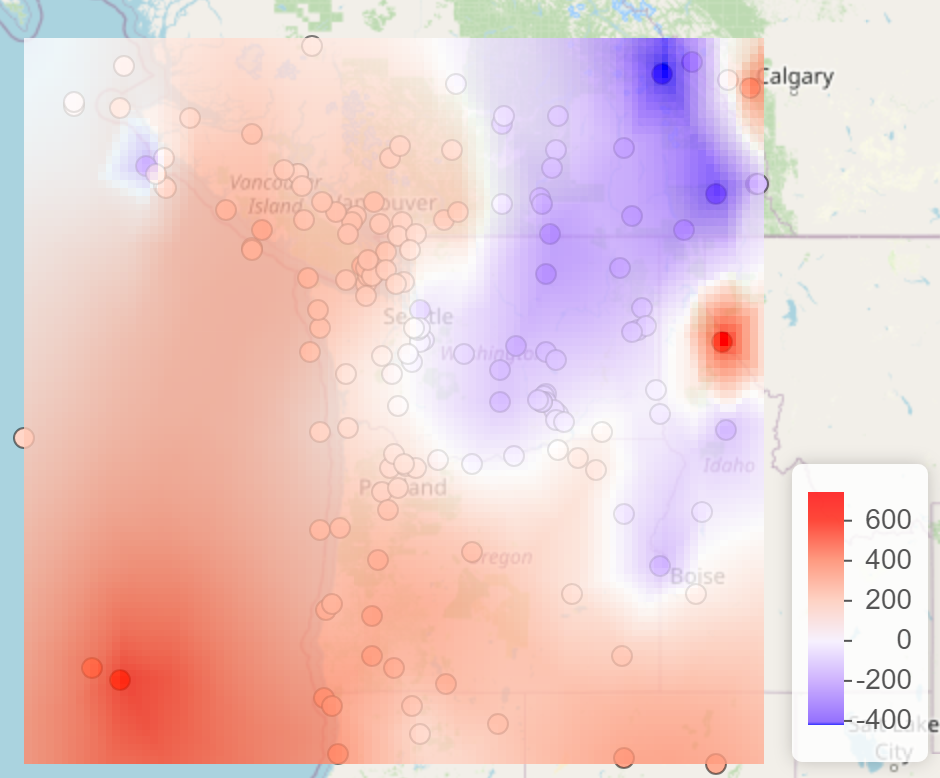} \\
(a)  & (b)  \\
 \includegraphics[width=0.45\linewidth, height = 4.7cm]{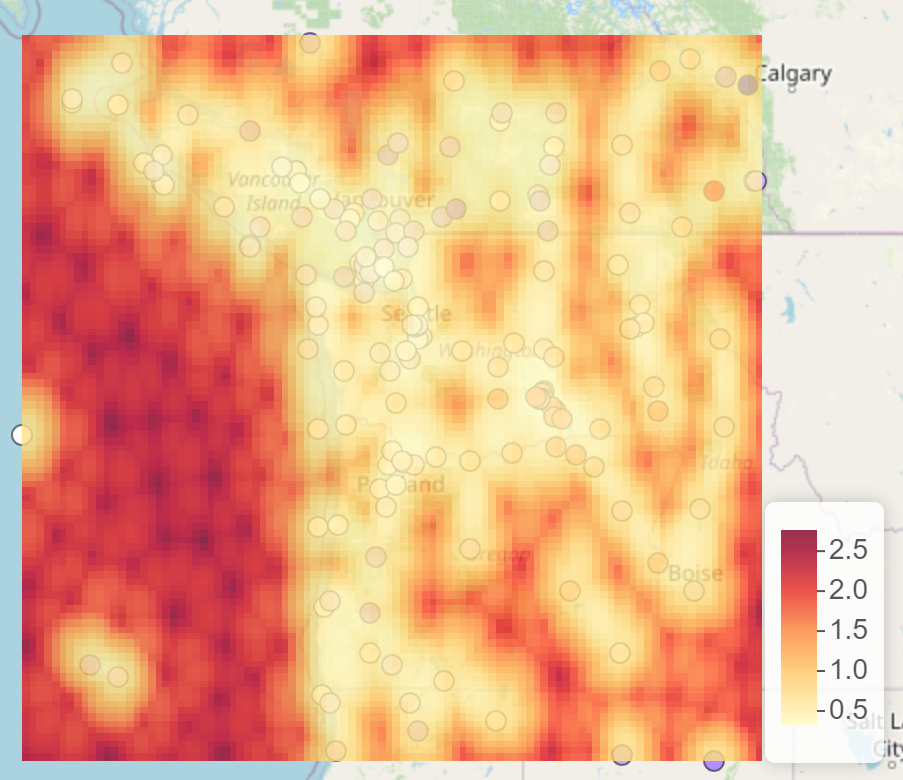} &   \includegraphics[width=0.45\linewidth, height = 4.7cm]{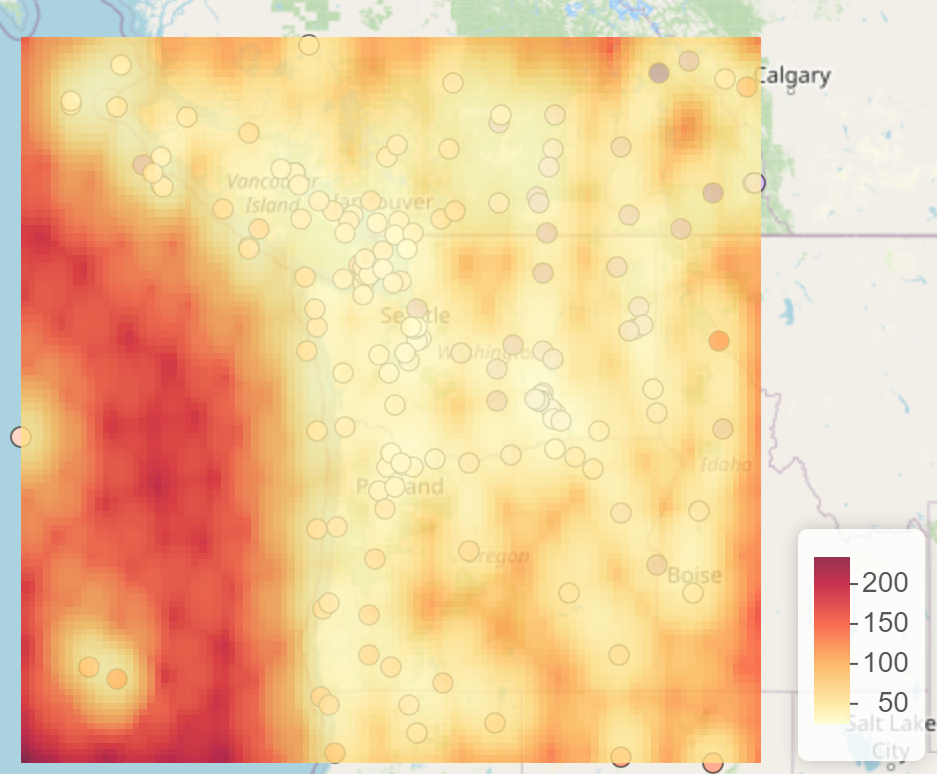} \\
(c) & (d) \\
\end{tabular}
\caption{
Plots related to temperature and pressure data are on the left and right, respectively. Plots (a)-(b) show the posterior means and plots (c)-(d) show the standard deviations of the latent field $X(\mathbf{s})$ for the first set of conservative PC priors.}
\label{fig:posteriorplots}
\end{figure}

To compare the performance of the models we performed a leave-one-out cross-validation (loocv) study. We also compared the two non-Gaussian models we fitted before ($\mathcal{M}_{PC}$ and $\mathcal{M}_{Unif}$) with a Gaussian model for the latent vector $\mathbf{x}$ ($\mathcal{M}_{Gauss}$). Performing a loovc study for non-Gaussian models by refitting a $Stan$ model at every held-out observation can be very expensive. The computation of the loocv estimates is made efficient in $Stan$ through the  $loo$ package \citep{vehtari2017practical}, however, the $loo$ function released several warnings that the approximated loo estimates were not reliable. Instead, we performed a pseudo loocv study similar to the ones performed in  \cite{bolin2020multivariate} and  \cite{bolin2020rational}, which assumes that the posterior distributions of the parameters when removing observation $i$, $\pi(\boldsymbol{\theta}|\mathbf{y}_{-i})$, is equal in distribution to $\pi(\boldsymbol{\theta}|\mathbf{y})$, where $\boldsymbol{\theta} = (\sigma_x,\sigma_y,\kappa,\eta^\star,\zeta^\star)$. Still, the posterior distribution of the latent field $\mathbf{x}$ and of the mixing variables $\mathbf{V}$ is affected when removing an observation from the dataset. 

\begin{table}[]
\centering
\begin{tabular}{ccccc}
\textbf{Data}                & \textbf{Model}          & \textbf{MSE} & \textbf{MAE} & \textbf{CRPS} \\ \hline
\multirow{3}{*}{Temperature} & Gaussian                & 4.477        & 1.506        & 0.859         \\
                             & NIG with PC prior       & 4.331        & 1.471        & 0.856         \\
                             & NIG with Unif. prior    & 4.881        & 1.491        & 0.863         \\ \hline
\multirow{3}{*}{Pressure}    & Gaussian model          & 24036.848    & 98.647       & 54.893        \\
                             & NIG model with PC prior & 20246.371     & 87.207       & 53.339        \\
                             & NIG model Unif. prior   & 21155.342     & 87.743       & 52.887   \\ \hline    
\end{tabular}
\caption{Mean squared error (MSE), mean absolute error (MAE), and continuously ranked probability score (CRPS) for the leave-one-out cross-validation predictions. }
\label{tab:loocv}
\end{table}

We computed the mean squared error (MSE), mean absolute error (MAE), and continuously ranked probability score (CRPS) of \cite{gneiting2007strictly} between the observed data $y_i$ and the held-out predictions for each observation $i=1,\dotsc,157$, and then averaged the results.  We show these results in Table \ref{tab:loocv}. The NIG model with near-uniform priors had a higher MSE and CRPS than the Gaussian model for the temperature data, so a very flexible non-Gaussian model for the temperature data did not translate into higher predictive performance.  If we look at the MSE and MAE estimates, the NIG model with PC priors had the best predictive power for both datasets, while the CRPS estimates indicate that the NIG model with near-uniform priors performed the best for the pressure data. These results suggest that the NIG model with PC priors did not overfit the data and had the highest predictive power if we use the posterior mean of the latent field $\mathbf{X}(\mathbf{s})$ as a spatial predictor.

\section{Conclusions and discussion} \label{sect:discussion}

There is a need for an inferential framework that shrinks models driven by non-Gaussian noise towards Gaussianity in order to avoid overfitting and considering a non-Gaussian model when there is not enough evidence in the data for asymmetry and leptokurtosis. We have proposed achieving this contraction by selecting weakly informative priors that penalize deviations from the simpler Gaussian model, quantified by the KLD. With an appropriate parameterization of the non-Gaussian models, this approach leads to priors that behave essentially like a LASSO penalty on the non-Gaussian flexibility parameters. 

The processes presented in this paper are more flexible, while at the same time being tractable, and are an attractive inherently robust alternative to Gaussian processes when the adequacy of the Gaussianity assumption is questionable.
\cite{barndorff1981hyperbolic} suggested that the GH distribution is well-qualified for robustness studies when the deviation from Gaussianity is in the form of asymmetry and leptokurtosis, and work is currently being done to investigate the robustness properties of this more general class of models.


 Future work also includes approximating these non-Gaussian models in the \emph{INLA} framework \citep{rue2007approximate}. A direct implementation of these models in \emph{INLA} is not possible because \emph{INLA} is only feasible for latent Gaussian models with up to 20 hyperparameters. If we condition the non-Gaussian vector $\mathbf{x}$ on the mixing variables $V_i$, we obtain a latent Gaussian model but the number of the mixing variables $V_i$, which would be hyperparameters in $INLA$, is larger than 20 in almost all applications.

\vfill

\bibliographystyle{ba}
\bibliography{bibliography.bib}

\begin{appendices}

\section{Characteristic function of the noise $\mathbf{\Lambda}$}\label{app:char}

The NIG and GAL distributions described in subsection 2.1 of the main paper are more easily defined through their characteristic functions (CFs) than their PDFs. The CF of the NIG distribution is
\begin{equation}\label{eq:CFNIG}
    \varphi_\Lambda(t) = \exp \left(-it  \tilde{\sigma} \zeta + \frac{1}{\eta} \left( 1-\sqrt{1- 2\eta(i t  \tilde{\sigma} \zeta -\tilde{\sigma}^2t^2/2)} \right) \right),
\end{equation}
and the CF of the GAL distribution is
\begin{equation}\label{eq:CFGAL}
\varphi_\Lambda(t) = e^{-it \tilde{\sigma} \zeta }\left(1- \eta(it \tilde{\sigma} \zeta -\tilde{\sigma}^2 t^2/2)\right)^{-\eta^{-1}},
\end{equation} 
where $\tilde{\sigma} = 1 / \sqrt{1+\eta \zeta^2}$. If $\Lambda_i$ is an increment of length $h_i$ of a Lévy process, the CFs of the noises are obtained by raising eqs. \eqref{eq:CFNIG} and \eqref{eq:CFGAL} to the power $h_i$.

\section{Properties of the field $\mathbf{x}$} \label{app:prop}

Consider the $n \times n$ non-singular matrix $\mathbf{D}$ and let $\mathbf{x}=[x_1,\dotsc,x_n]^T$ be a $n$-dimensional random vector defined via $\mathbf{D}\mathbf{x} = \mathbf{\Lambda}$. The marginals of $\mathbf{x}$ are given by $x_i= \sum_{j=1}^n D_{ij}^{-1}\Lambda_j$, and so the CF is $\varphi_{x_i}(t) = \prod_{j=1}^n \varphi_{\Lambda_j}(D_{ij}^{-1}t)$, with $\varphi_{\Lambda_j}(t)$ specified in Appendix \ref{app:char}. For NIG noise, it is given by
\begin{equation}\label{eq:CFmarginalNIG}
\varphi_{x_i}(t)=\exp\left(\sum_j h_j \left( -it \tilde{\sigma} \zeta D_{ij}^{-1} + \frac{1}{\eta} \left(1-\sqrt{1-2\eta(it\tilde{\sigma}\zeta D_{ij}^{-1}-t^2\tilde{\sigma}^2 (D_{ij}^{-1})^2/2 )} \right)\right)\right),
\end{equation}
and for GAL noise it is
\begin{equation}\label{eq:CFmarginalGAL}
\varphi_{x_i}(t)=\exp\left(\sum_j h_j \left( -it \tilde{\sigma} \zeta D_{ij}^{-1} - \frac{1}{\eta}\log\left(1- \eta(it \tilde{\sigma} \zeta D_{ij}^{-1} -t^2\tilde{\sigma}^2(D_{ij}^{-1})^2/2)\right)\right)\right).
\end{equation}

The moments of the marginals $x_i$ can be obtained from the previous CFs. The marginal mean is 0, and variance is $\text{V}[x_i]  = \sum_j h_j D_{ij}^{-2}$. For NIG noise, the marginal skewness ($S$) and excess kurtosis ($EK$) are:
   $$S[x_i] = \frac{3\zeta\eta}{\sqrt{1+\zeta^2\eta}}{\frac{\sum_j h_j D_{ij}^{-3}}{\left(\sum_j h_j D_{ij}^{-2}\right)^{3/2}}},$$
   $$EK[x_i]= \frac{3\eta(1+h\zeta^2\eta)}{1+\zeta^2\eta} \frac{\sum_jh_j D_{ij}^{-4}}{\left(\sum_j h_j D_{ij}^{-2}\right)^2},$$
and for GAL noise, these are:
 $$S[x_i] =\frac{\zeta\eta(3+2\zeta^2\eta)}{\sqrt{(1+\zeta^2\eta)^{3}}}{\frac{\sum_j h_j D_{ij}^{-3}}{\left(\sum_j h_j D_{ij}^{-2}\right)^{3/2}}},$$
  $$EK[x_i]= \frac{3 \eta (1 + 4 \zeta^2 \eta + 2 \zeta^4 \eta^2)}{(1 + \zeta^2 \eta)^2} \frac{\sum_jh_j D_{ij}^{-4}}{\left(\sum_j h_j D_{ij}^{-2}\right)^2}.$$

\section{Non-Gaussian moving averages} \label{app:ngma}

Subsection 4.2 of the main paper presented several continuous stationary stochastic processes that can be expressed through the SDE $\mathcal{D}X(t) = \sigma dL(t), t \in \mathbb{R}$, and here we characterize the marginal distributions of the process $X(t)$. An alternative representation of the model is given by a process convolution:
\begin{equation}\label{eq:convL}
X(t) = \int_{-\infty}^{+\infty} G(t-u) dL(u),
\end{equation}
where $G(t,u)=G(t-u)$ is the Green function of the operator $\mathcal{D}$ and $L(t)$ is the background driving Lévy process constructed so that $L(1)$ follows either a standardized NIG or GAL distribution. The Lévy process is extended to the whole real line by taking two independent copies of it and mirroring one of them at the origin. This extended process can be used to define eq. $\eqref{eq:convL}$, and so $X(t)$ can be seen as a convolution of $G$ with the increments of the process $L(t)$. The Green function of the Ornstein-Uhlenbeck process is $G(t) = \Theta(t)e^{-\kappa t}$, where $\Theta(t)$ is the Heaviside step function, and the Green function of the Matérn model in dimension $d$, with smoothness parameter $\alpha$, and spatial range parameter $\kappa$ is
$$
G_{\alpha}(\mathbf{s}, \mathbf{t})=\frac{2^{1-\frac{\alpha-d}{2}}}{(4 \pi)^{\frac{d}{2}} \Gamma\left(\frac{\alpha}{2}\right) \kappa^{\alpha-d}}(\kappa\|\mathbf{s}-\mathbf{t}\|)^{\frac{\alpha-d}{2}} K_{\frac{\alpha-d}{2}}(\kappa\|\mathbf{s}-\mathbf{t}\|).
$$
Processes $X(t)$ defined via eq. \eqref{eq:convL} are also referred to as non-Gaussian moving averages, and according to Proposition 1 of \cite{aaberg2011class} (where a different parameterization is used), the CF of the marginals of $X(t)$ when $L(t)$ is a GAL process is
$$
\varphi_{X(t)}(u)=\exp\left(\int_{-\infty}^{+\infty}-iu\tilde{\sigma} \zeta G(x) - \frac{1}{\eta}\log\left(1- \eta(iu \tilde{\sigma} \zeta G(x) -u^2\tilde{\sigma}^2G(x)^2/2)\right) dx\right).
$$
The previous expression is similar to eq. \eqref{eq:CFmarginalGAL}, where for the continuous case, one deals with the Green function $G(t)$ instead of the matrix $\mathbf{D}^{-1}$. The expressions for the marginal moments are also similar to those in Appendix \ref{app:prop}, where $\sum_j h_j D_{ij}^{-2}$ should be replaced with $\int G(x)^2 dx$, $\sum_j h_j D_{ij}^{-3}$ should be replaced with $\int G(x)^3 dx$ and so on. When the background driving Lévy process is a NIG process, the marginal CFs can be shown to be
$$
\varphi_{X(t)}(u)=\exp\left(\int_{-\infty}^{\infty} -iu \tilde{\sigma} \zeta G(x) + \frac{1}{\eta} \left(1-\sqrt{1-2\eta(iu\tilde{\sigma}\zeta G(x)-u^2\sigma^2 G(x)^2/2 )}  \right)dx\right),
$$
which follows from Proposition 2.1 of \cite{Barndorffsuper}. 

Table \ref{tab:cfs} shows the CFs and moments of the marginal distributions of several models. The marginal CF for stationary and isotropic random fields defined via the SPDE  $\mathcal{D}X(\mathbf{s})=\mathcal{{L}}(\mathbf{s}), \mathbf{s} \in \mathbb{R}^d$ driven by GAL white noise is
$$
\varphi_{X(\mathbf{s})}(u) = \exp\left(2\pi\int_{0}^{+\infty}\left(-iu\tilde{\sigma} \zeta G(r) - \frac{1}{\eta}\log(1- \eta(iu \tilde{\sigma} \zeta G(r) -u^2\tilde{\sigma}^2G(r)^22))\right) r dr\right),$$
where $G(\mathbf{s},\mathbf{t})$ is the Green function associated with $\mathcal{D}$, which for isotropic random fields depends only on $r = ||\mathbf{s} - \mathbf{t} ||$. The obvious transformations apply when the driving noise is NIG white noise. 









\begin{sidewaystable}

\def\arraystretch{1.5}
\begin{tabular}{ccccc}

\hline
\textbf{Model}               & \textbf{log} $\boldsymbol{\varphi}\boldsymbol{(u)}$  \textbf{ in symmetric case}                                                                                                                                                                                         & \textbf{Variance}             & \textbf{Skewness}                  & \textbf{Excess Kurt.}                                  \\ \hline
GAL OU $d=1$                 & $\frac{\text{Li}_2\left(-\frac{1}{2} \eta  u^2\right)}{2 \kappa \eta}$                                                                                                                                                         & $\frac{1}{2\kappa}$   & $S_{GAL}\frac{2\sqrt{2\kappa}}{3}$ & $K_{GAL}\kappa$                                    \\
GAL Matérn $d=1$, $\alpha=2$ & $\frac{\text{Li}_2\left(-\frac{\eta  u^2}{8 \kappa^2}\right)}{\kappa \eta}$                                                                                                                                                    & $\frac{1}{4\kappa^3}$  & $S_{GAL}\frac{2\sqrt{\kappa}}{3}$   & $K_{GAL}\frac{\kappa}{2}$                           \\
GAL Matérn $d=2$, $\alpha=2$   & -                                                                                                                                                                                                                                    & $ \frac{1}{4\kappa^2\pi}$ & $0.661204 S_{GAL} $                          & $K_{GAL}\frac{7 \kappa^2 \tilde{\zeta} (3)}{4 \pi }$ \\ \hline
NIG OU $d=1$                 & $\frac{-\sqrt{\eta u^2+1}+\text{csch}^{-1}\left(\sqrt{\eta} u\right)+\frac{\log (\eta)}{2}+\log \left(\frac{1}{2}\right)+\log (u)+1}{\kappa \eta}$                                                 & $ \frac{1}{2\kappa}$                    & $S_{NIG}\frac{2\sqrt{2\kappa}}{3}$                          & $K_{NIG}\kappa$                                          \\
NIG Matérn $d=1$, $\alpha=2$   & $\frac{\kappa \left(2 \sinh ^{-1}\left(\frac{2 \kappa}{\sqrt{\eta} u}\right)-2 \log (\kappa)+\log (\eta)+2 \left(\log \left(\frac{1}{4}\right)+\log (u)+1\right)\right)-\sqrt{4 \kappa^2+\eta  u^2}}{\kappa^2 \eta}$ & $\frac{1}{4\kappa^3}$                     & $S_{NIG}\frac{2\sqrt{\kappa}}{3}$                          & $K_{NIG}\frac{\kappa}{2}$                                          \\
NIG Matérn $d=2$, $\alpha=2$   & -                                                                                                                                                                                                                                    & $ \frac{1}{4\kappa^2\pi}$                     & $0.661204S_{NIG}$                          & $K_{NIG}\frac{7 \kappa^2 \tilde{\zeta}(3)}{4 \pi }$                                          \\ \hline
\end{tabular}
\caption{Logarithm of the characteristic function in the symmetric case, variance, skewness and excess kurtosis of the marginal distribution of $X$ for several stationary models. $Li(t)$ is the PolyLog function, $csch$ is the hyperbolic cosecant function, $\log$ is the complex logarithmic function, and $\tilde{\zeta}$ is the Riemann zeta function. Furthermore,
$S_{NIG} = \frac{3\zeta\eta}{\sqrt{1+\zeta^2\eta}}, \ $             
$K_{NIG} = \frac{3\eta(1+5\zeta^2\eta)}{(1+\zeta^2\eta)}, \ $  
$S_{GAL} =\frac{\zeta\eta(3+2\zeta^2\eta)}{\sqrt{(1+\zeta^2\eta)^{3}}}$, and 
$K_{GAL} = \frac{3 \eta (1 + 4 \zeta^2 \eta + 2 \zeta^4 \eta^2)}{(1 + \zeta^2 \eta)^2}.$}
\label{tab:cfs}
\end{sidewaystable}

\section{Derivation of the PC priors}

 Here we prove the two theorems presented on the paper.
 
\subsection{Proof of Theorem 3.1.} \label{app:theo1}

\begin{proof}

The KLD between the two random noise variables $\Lambda_i$ and $Z_i$ is

\begin{equation} \label{eq:KLDexp}
KLD(\Lambda_i \ || \ Z_i) = \int \pi_{\Lambda_i}(x|\eta)\log\left({\frac{\pi_{\Lambda_i}(x|\eta)}{\pi_{Z_i}(x)}}\right) dx.
\end{equation}
We start by assuming that $\Lambda_i$ is symmetric NIG noise. The Taylor expansion of the NIG density $\pi^{\text{NIG}}_{\Lambda_i}(x|\eta, \zeta=0)$ near $\eta=0$ yields the Gaussian density $\pi_{Z_i}(x)$ multiplied by a polynomial of $\eta$: 
\begin{flalign}
    \pi^{\text{NIG}}_{\Lambda_i}(x) =  \frac{h_i e^{\frac{h_i}{\eta }} K_1\left(\sqrt{\frac{1}{\eta }} \sqrt{x^2+\frac{h_i^2}{\eta }}\right)}{\pi  \eta  \sqrt{\frac{h_i^2}{\eta }+x^2}} = \pi_{Z_i}(x)\left(1 + f_1(x,\eta)\eta  + f_2(x,\eta)\eta^2 + \mathcal{O}(\eta^3)\right), \nonumber \\
     f_1(x,\eta) = \frac{(3 h_i^2-6 h_i x^2+x^4)}{8 h_i^3}, \ \ 
     f_2(x,\eta) = \frac{ \left(90 h_i^2 x^4-60 h^3 x^2-15 h_i^4-20 h_i x^6+x^8\right)}{128 h_i^{6}}. \nonumber 
\end{flalign}
Replacing the previous expansion in eq. \eqref{eq:KLDexp} yields
\begin{align*} 
KLD(\Lambda^{\text{NIG}}_i \ || \ Z_i) = \frac{3}{16h_i^2}\eta^2 -\frac{9}{16h_i^3}\eta^3+\frac{261}{128h_i^4}\eta^4 + \mathcal{O}(\eta^5).
\end{align*}

When $\Lambda_i$ is symmetric GAL noise, it is easier to start by doing the Taylor expansion of the characteristic function around $\eta=0$:
\begin{align*}
\varphi^{\text{GAL}}_{\Lambda_i}(t) &=  2^{\frac{h_i}{\eta}}\left(\frac{1}{2+\eta t^2}\right)^{\frac{h_i}{\eta}} \\
&= e^{-\frac{h_i t^2}{2}}\left( 1 + \frac{h_i t^4}{8} \eta + \frac{-16 h_i t^6 + 3 h_i^2 t^8}{384}\eta^2 + \mathcal{O}(\eta^3)\ \right).
\end{align*}
The inverse Fourier transform of the previous expansion also yields the Gaussian density $\pi_{Z_i}(x)$ multiplied by a polynomial of $\eta$:
\begin{flalign}
\pi^{\text{GAL}}_{\Lambda_i}(x) = \pi_{Z_i}(x)\left(1 + f_1(x,\eta)\eta  + f_3(x,\eta)\eta^2  +\mathcal{O}(\eta^3) \right),   \nonumber \\
f_3(x,\eta) = \frac{75 h_i^4 - 540 h_i^3x^2 + 390 h_i^2 x^4 -68 h_i x^6+3x^8}{384 h_i^{6}}.\nonumber
\end{flalign}
By utilizing the previous expansion in eq. \eqref{eq:KLDexp} one gets
\begin{align*} 
KLD(\Lambda_i^{GAL} \ || \ Z_i) = \frac{3}{16h_i^2}\eta^2 -\frac{9}{16h_i^3}\eta^3+\frac{401}{128h_i^4}\eta^4 + \mathcal{O}(\eta^5).
\end{align*}

Consequently, using the Remark 1 of the main paper (section 3), the KLD between the non-Gaussian $\mathbf{x} = \mathbf{D}^{-1} \mathbf{\Lambda}$ and Gaussian $\mathbf{x}^G = \mathbf{D}^{-1} \mathbf{Z}$ random vectors is
\begin{equation*}
KLD(\mathbf{x} \ || \ \mathbf{x}^{G}) = \sum_{i=1}^{n} KLD(\Lambda_i \ || \ Z_i) =   \frac{3}{16}\left(\sum_{i=1}^n\frac{1}{h_i^2}\right)\eta^2 -\frac{9}{16}\left(\sum_{i=1}^n\frac{1}{h_i^3}\right)\eta^3 + \mathcal{O}(\eta^4),
\end{equation*}
when the driving noise follows either the NIG or GAL distributions.
\end{proof}

\subsection{Proof of Theorem 3.2.} \label{app:theo2}

\begin{proof}

We will leverage on the conditional representation of the non-Gaussian random vector $\mathbf{x}$ shown in eq. (5) of the main document and on the chain rule for KLDs \citep{informationtheory}:
\begin{flalign} \label{eq:KLD1}
 KLD(\pi(\mathbf{x,V}) \ || \ \pi^{Sym}(\mathbf{x,V})) = \\ 
&& \hspace{-2cm}  KLD(\pi(\mathbf{V}) \ || \ \pi^{Sym}(\mathbf{V})) + E_{\pi(\mathbf{V})}\left[KLD(\pi(\mathbf{x|V}) \ || \ \pi^{Sym}(\mathbf{x|V}))\right] \nonumber
\end{flalign}
Both terms on the right-hand side (RHS) of the previous expression can be evaluated. Also an upper bound can be found for the KLD we want to compute:
\begin{equation}\label{eq:KLD2}
KLD(\pi(\mathbf{x})||\pi^{Sym}(\mathbf{x})) \leq KLD(\pi(\mathbf{x,V})||\pi^{Sym}(\mathbf{x,V})).
\end{equation}
This inequality can be obtained by switching the vectors $\mathbf{x}$ and $\mathbf{V}$ in eq. \eqref{eq:KLD1} to get:
\begin{flalign} \label{eq:KLD3}
 KLD(\pi(\mathbf{x}) \ || \ \pi^{Sym}(\mathbf{x}))= \\ 
&& \hspace{-2cm}  KLD(\pi(\mathbf{x,V})\ || \ \pi^{Sym}(\mathbf{x,V})) - E_{\pi(\mathbf{x})}\left[KLD(\pi(\mathbf{V|x})\ || \ \pi^{Sym}(\mathbf{V|x}))\right] \nonumber
\end{flalign}
As $KLD(\pi(\mathbf{V|x}) \ || \ \pi^{Sym}(\mathbf{V|x})) \geq 0$, the inequality in eq. \eqref{eq:KLD2} follows. Next, considering $\tilde{\sigma} = 1/\sqrt{1+\eta \zeta^2}$ we restate the flexible and base models:
\begin{align*}
\text{Flexible model:} \ \  &\pi( \mathbf{x} | \boldsymbol{V} ) \sim \text{N}\left(\boldsymbol{m} = \tilde{\sigma}\zeta \mathbf{D}^{-1}(\boldsymbol{V}-\boldsymbol{h}), \ \mathbf{\Sigma} = \tilde{\sigma} \mathbf{D}^{-1} \text{diag}(\boldsymbol{V})^{-1} \mathbf{D}^{-T}\right)  \\ \\
\text{Base model:}  \ \ &\pi^{Sym}(\mathbf{x}|\boldsymbol{V}) \sim \text{N}\left(\boldsymbol{m}_S = \boldsymbol{0}, \ \mathbf{\Sigma}_S =  \mathbf{D}^{-1} \text{diag}(\boldsymbol{V})^{-1} \mathbf{D}^{-T}\right) \\ \\
\text{In both cases:} \ \ &\pi(V_{i}) \overset{d}{=} \pi^{Sym}(V_{i})  \overset{ind.}{\sim}  \begin{cases}
      \text{IG}(h_i,\eta^{-1} h_i^2)   &\text{(NIG Noise)} \\
      \text{Gamma}(h_i \eta^{-1},\eta^{-1})  &\text{(GAL Noise).}
    \end{cases}   
\end{align*}

Since $\pi(\mathbf{V}) \overset{d}{=} \pi^{Sym}(\mathbf{V})$, then $KLD(\pi(\mathbf{V})||\pi^{Sym}(\mathbf{V})) = 0$, and so to compute \linebreak $KLD(\pi(\mathbf{x,V})||\pi^{Sym}(\mathbf{x,V}))$ we only need to find the second term on the RHS of \mbox{eq. \eqref{eq:KLD1}} which requires the KLD between two Gaussian distributions:
\begin{flalign}\label{eq:KLDsimple}
 &KLD(\pi(\mathbf{x|V})||\pi^{Sym}(\mathbf{x|V}))  \nonumber \\*
 &= {\frac {1}{2}}\left(\operatorname {tr} \left(\mathbf{\Sigma}_S^{-1} \mathbf{\Sigma}\right)+\left(\boldsymbol{m}_S -\boldsymbol{m}\right)^{\mathsf {T}}\mathbf{\Sigma}_S^{-1}\left(\boldsymbol{m}_S-\boldsymbol{m}\right)- n +\log \left({\frac { |\mathbf{\Sigma}_S| }{|\mathbf{\Sigma}|}}\right)\right) \nonumber \\*
&= \frac{1}{2}\left( \frac{n}{1+\eta\zeta^2} + \frac{\zeta^2}{1+\eta\zeta^2} \left(\sum_{i=1}^nV_i-2\sum_{i=1}^nh_i+\sum_{i=1}^n \frac{h_i^2}{V_i}\right) -n+ n\log\left(1+\eta\zeta^2\right)\right)
\end{flalign}

According to eq. \eqref{eq:KLD1}, we now take the expectation of the previous expression w.r.t. $\pi(\mathbf{V})$. When the driving noise follows the NIG distribution, the mixing variables satisfy $V_i \sim \text{IG}(h_i, \eta^{-1}h_i^2)$, the expectations are $E[V_i]=h_i$, and $E[1/V_i]= (\eta + h_i)/h_i^2$ and the previous KLD simplifies to
$$KLD(\pi(\mathbf{x,V})||\pi^{Sym}(\mathbf{x,V})) =  E_{\pi(\mathbf{V})}\left[KLD(\pi(\mathbf{x|V})||\pi^{Sym}(\mathbf{x|V}))\right] = \frac{n}{2}\log \left(1+\eta\zeta^2\right).$$
The result of the Theorem follows from replacing the previous expression in  eq. \eqref{eq:KLD2} and then applying the inequality $\log \left(1+\eta\zeta^2\right) \leq \eta\zeta^2$. For GAL noise, the mixing variables follow the gamma distribution, thus $E[V_i]=h_i$ and  
\begin{equation*}
E\left[\frac{1}{V_i}\right] =  \begin{cases}
      \frac{1}{h_i-\eta}  &\text{if} \ \ \eta < h_i \\
      \text{Indeterminate}  &\text{if} \ \ \eta \geq h_i.
    \end{cases}
\end{equation*}
Therefore $KLD(\pi(\mathbf{x,V})||\pi^{Sym}(\mathbf{x,V}))$ only exists when $\eta < \min_{i=1,\dotsc,n} h_i$. If this condition is met, it can be shown that
$$KLD(\pi(\mathbf{x,V})||\pi^{Sym}(\mathbf{x,V})) \leq \frac{n}{2}\log \left(1+\eta\zeta^2\right),$$ by taking the expectation of eq. \eqref{eq:KLDsimple} w.r.t. $\pi(\mathbf{V})$ and hence the result also follows in this case.

\end{proof}

\section{Penalizing the probability of large marginal events in Matérn and OU processes} \label{sect:scalingetaMatern}
The marginals of the process $X(t)$ defined in Appendix \ref{app:ngma}, will be non-Gaussian and exhibit heavier tails when $\eta>0$. Thus, we can calibrate the prior by penalizing the probability of large events in the marginals: $P(|X(t)|>3\sigma_{marg}(\sigma,\kappa))$, where $\sigma_{marg}$ is the marginal standard deviation. For Matérn models, it is $$\sigma_{marg}(\sigma,\kappa) = \sigma \kappa^{-(\alpha-d/2)}\sqrt{\Gamma(\alpha-d/2)/(\Gamma(\alpha)(4\pi)^{d/2})}.$$
The marginal excess kurtosis for the Matérn model depends on $\kappa$, the spatial range parameter, for instance, for $\alpha=2$ in 1D, the excess kurtosis is proportional to $\kappa\eta$, in the symmetric case. Therefore, we should take the range parameter $\kappa$ into account when calibrating the prior. We compute how many more large marginal events we are expected to have compared with the Gaussian case:
\begin{equation} \label{eq:ratioQ}
Q(\eta,\kappa) =  \frac{P(|X_{\eta,\kappa}(t)|>3\sigma_{marg}(1,\kappa))}{P(|X_{\eta=0,\kappa}(t)|>3\sigma_{marg}(1,\kappa))} , 
\end{equation}
where $X_{\eta,\kappa}(t)$ is the marginal distribution of the continuous process $X$ with spatial range parameter $\kappa$ driven by symmetric non-Gaussian noise with parameters $\eta$ and $\sigma=1$, and $X_{\eta=0,\kappa}$ is its Gaussian counterpart. We can compute the probabilities in eq. \eqref{eq:ratioQ} based on the characteristic functions of the marginals $X(t)$ given in Appendix C, and then the calibration is done by setting a low probability $\alpha_\eta$ on the event $Q(\eta,\kappa) > 2$: $$P(Q(\eta,\kappa) > 2)= \alpha_\eta \ \ \longrightarrow \ \ P(\eta > Q^{-1}(2|\kappa)) = \alpha_\eta,$$
where $Q^{-1}(2|\kappa)$ is defined as the root of the equation $Q(\eta,\kappa)-2$ w.r.t. $\eta$ while keeping $\kappa$ fixed. The rate parameter of the exponential prior distribution for $\eta$ will then be $\theta_\eta(\kappa) = -\log(\alpha_\eta)/(Q^{-1}(2|\kappa))$.
Numerical computations showed that $Q^{-1}(2|\kappa)$ follows with high accuracy the functions of $\kappa$ in Table \ref{tab:qm1}. The calibration then boils down to choosing the probability $\alpha_\eta$, which is the probability that the marginals will have twice as many large marginal events compared with the Gaussian case. This user-defined probability governs the contraction towards the Gaussian model (the closer $\alpha_\eta$ is to 0, the higher is the contraction). 

\begin{table}
\center
\begin{tabular}{ccc}
\hline
\textbf{Model}     & \textbf{NIG}         & \textbf{GAL}\\ \hline 
OU $d=1$               & $0.1566 \kappa^{-1}$ & $0.1540 \kappa^{-1}$ \\
Matérn $d=1$, $\alpha=2$ & $0.2676\kappa^{-1}$  & $0.2488\kappa^{-1}$  \\
Matérn $d=2$, $\alpha=2$ & $0.2513\kappa^{-2}$   & $0.2513\kappa^{-2}$  \\ \hline
\end{tabular}
\caption{$Q^{-1}(2|\kappa)$ for several stationary models and driving noises, where $d$ is the spatial dimension and $\alpha_\eta$ is the smoothness parameter.}
\label{tab:qm1}
\end{table}

\section{Additional simulation figures and studies}\label{app:simfigs}


Fig. \ref{fig:sim2} shows generated sample paths for the different simulation scenarios. The following Figs. \ref{fig:sim3} and \ref{fig:sim32} contain box plots, each built from the 200 posterior means and width of credible intervals (based on the 0.5 and 0.95 posterior quantiles) obtained from the 200 simulation replicates of the first simulation set. More details are found in section 5 of the main paper.

\begin{figure}[]
    \centering
 \includegraphics[width=0.45\linewidth]{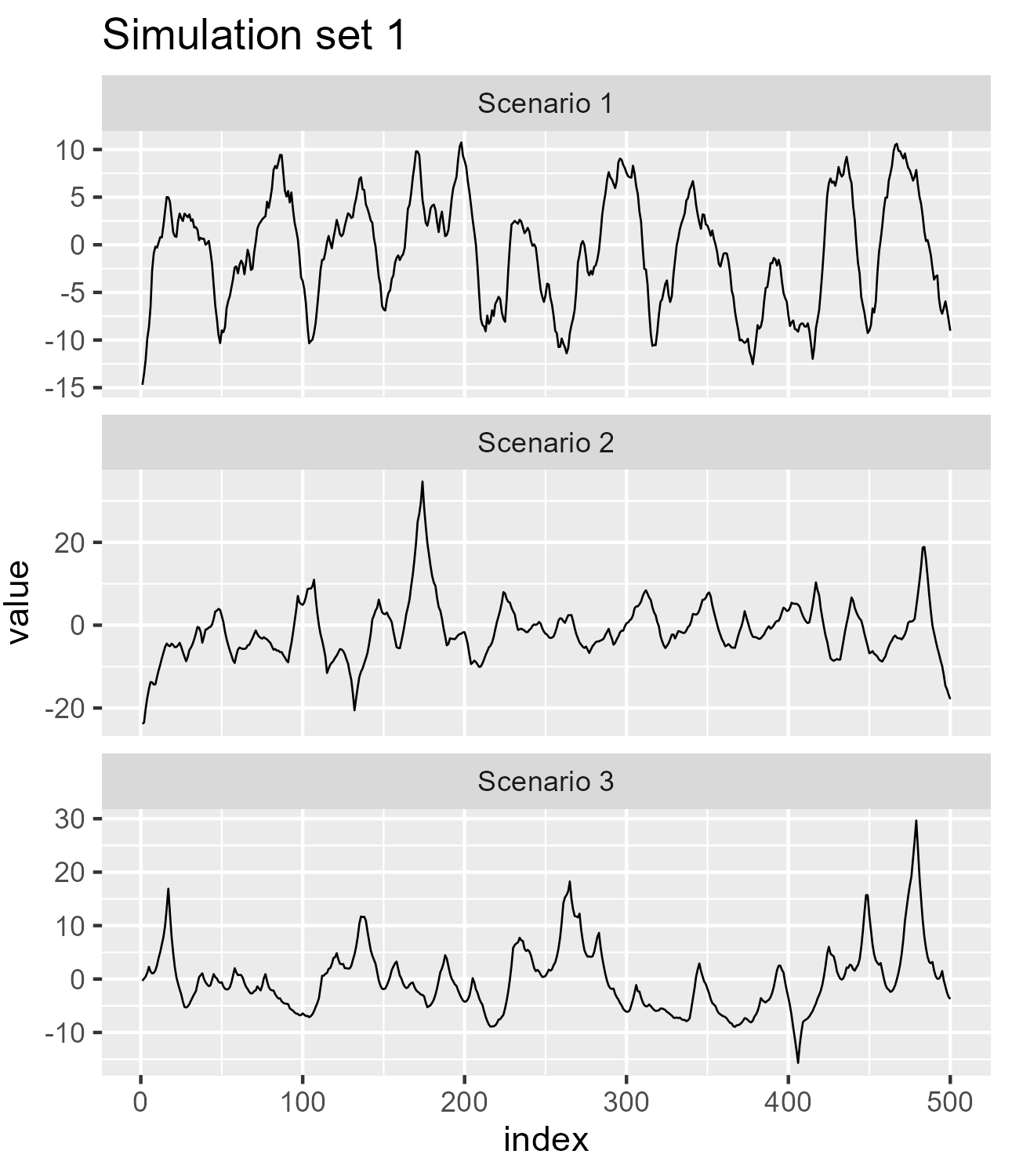}
  \includegraphics[width=0.45\linewidth]{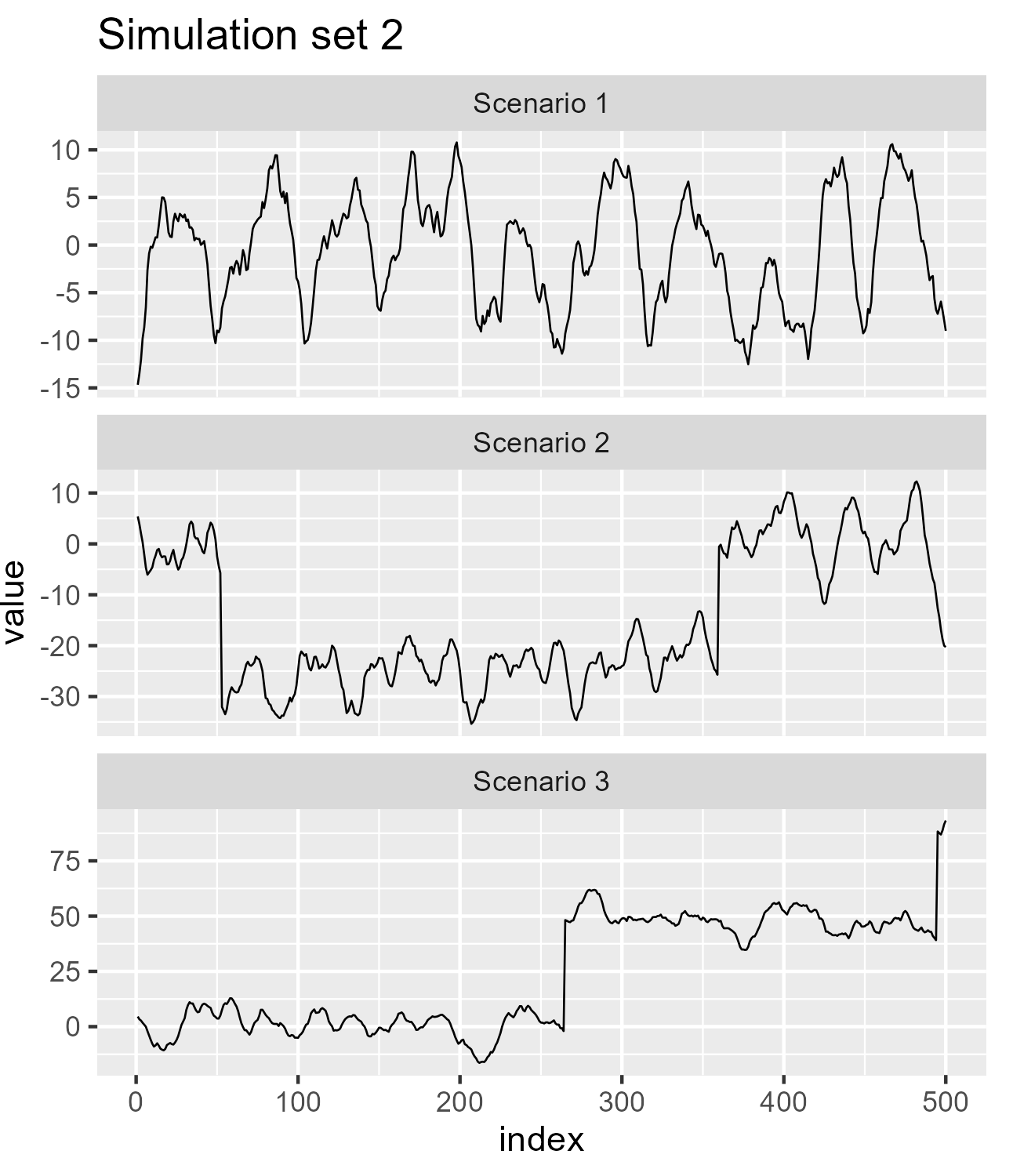}
  \caption{Sample paths for the 2 simulation sets and scenarios.}
  \label{fig:sim2}
\end{figure}

\begin{figure}[]
    \centering
 \includegraphics[width=0.8\linewidth]{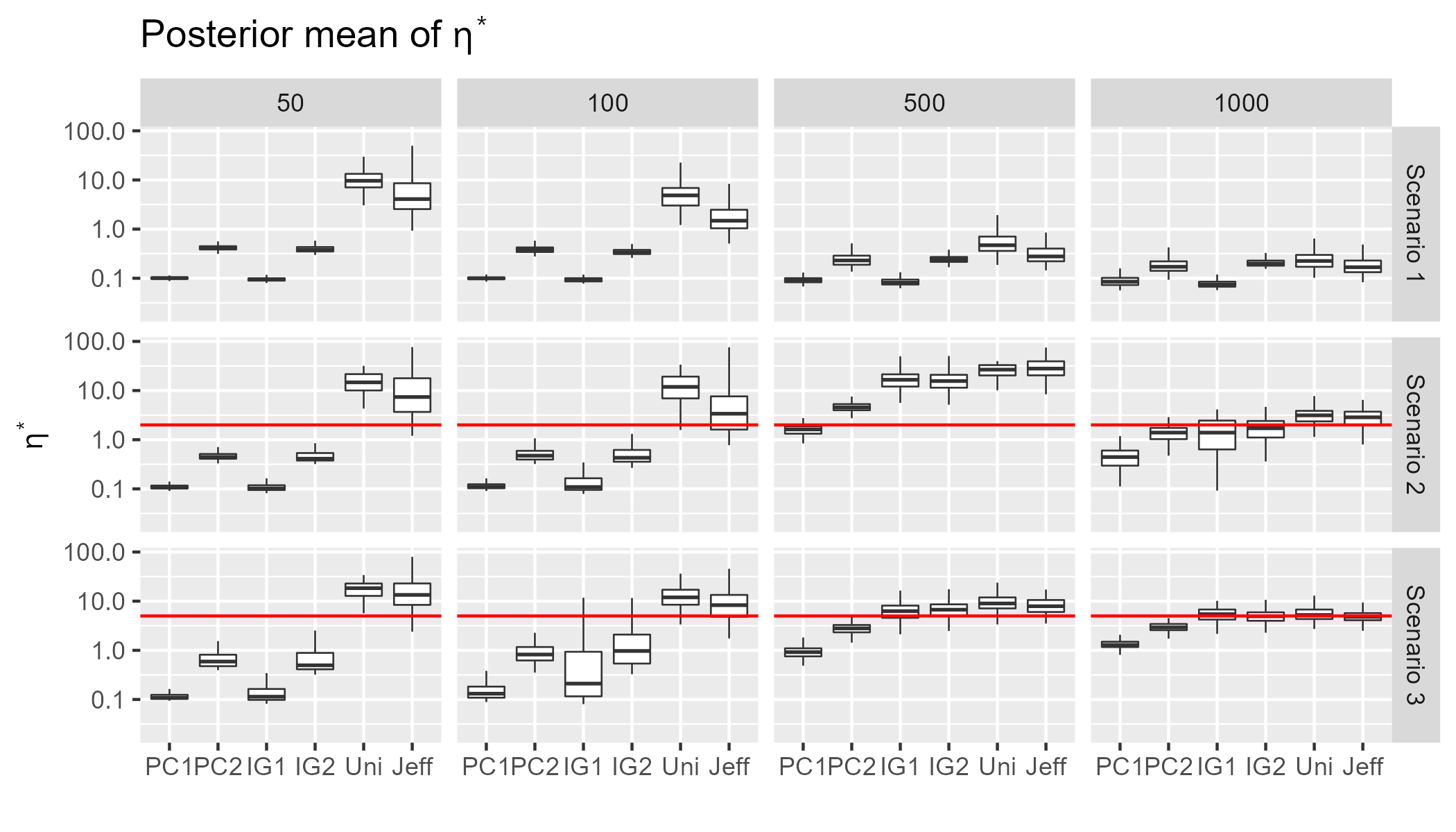}
   \includegraphics[width=0.8\linewidth]{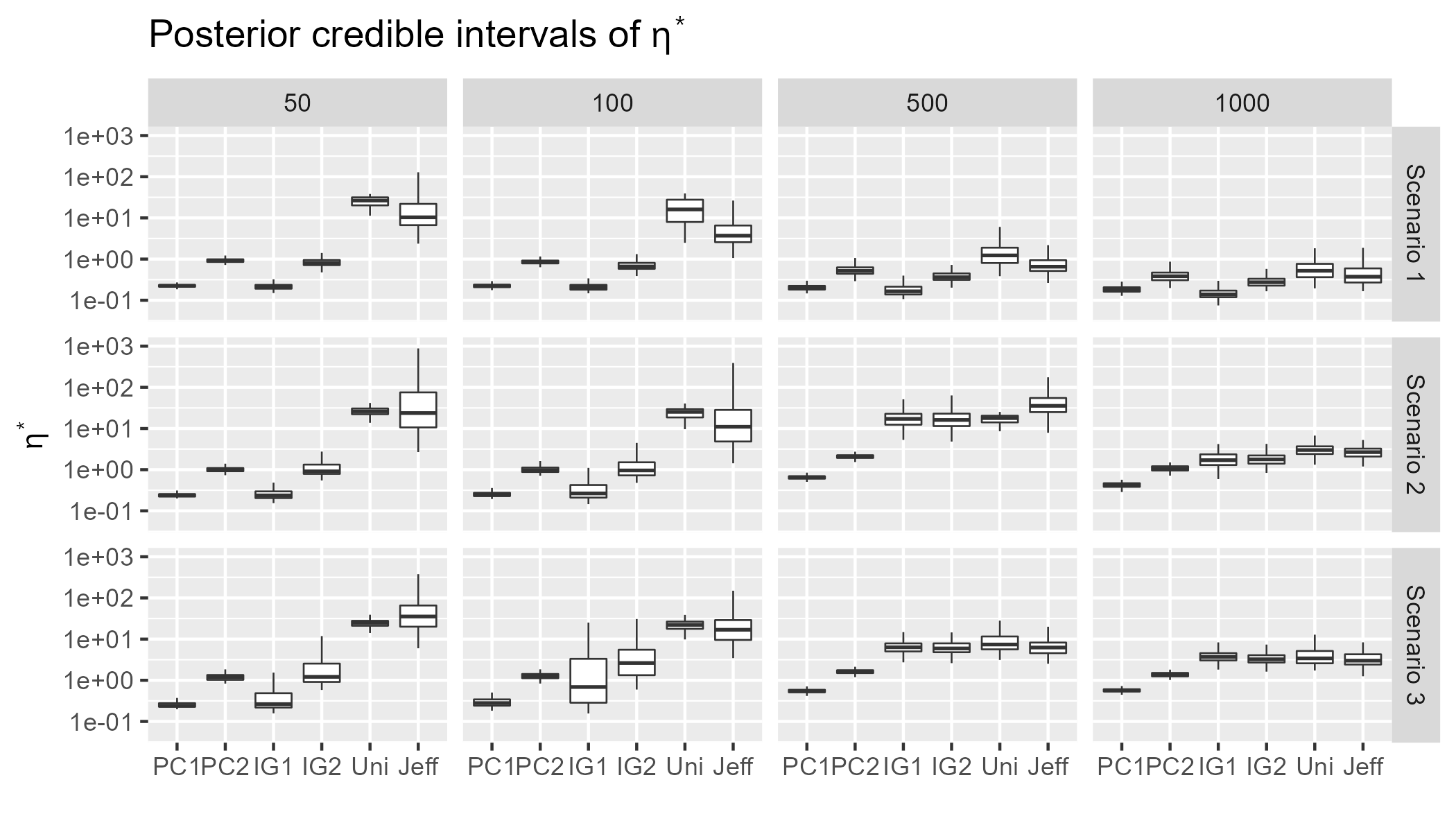}
  \caption{Histograms of the posterior means (top) and width of the credible intervals (bottom) of $\eta^\star$ and for different sample sizes, prior configurations, and scenarios of simulation set 1.}
  \label{fig:sim3}
\end{figure}

\begin{figure}[]
    \centering
     \includegraphics[width=0.8\linewidth]{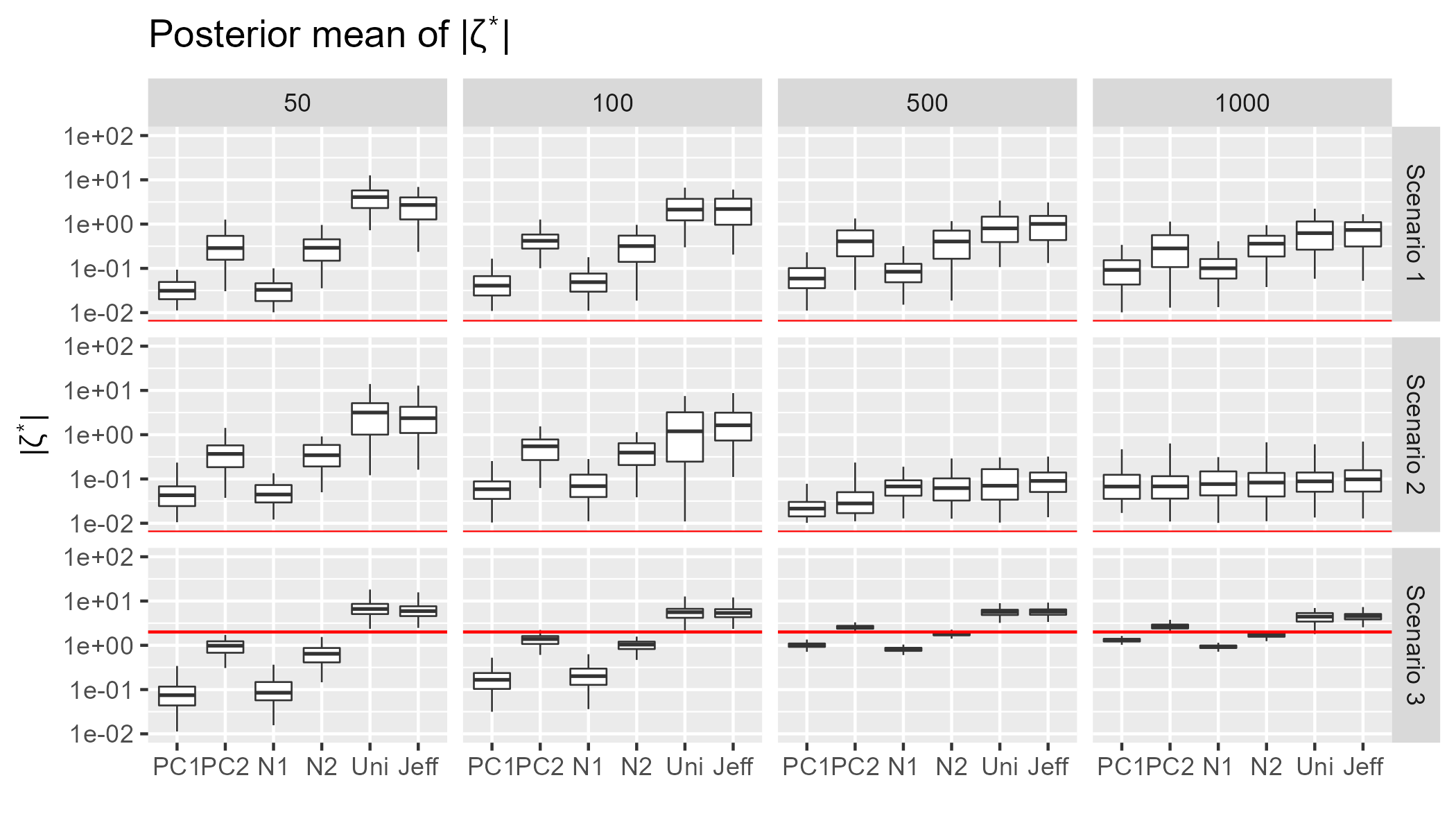}
  \includegraphics[width=0.8\linewidth]{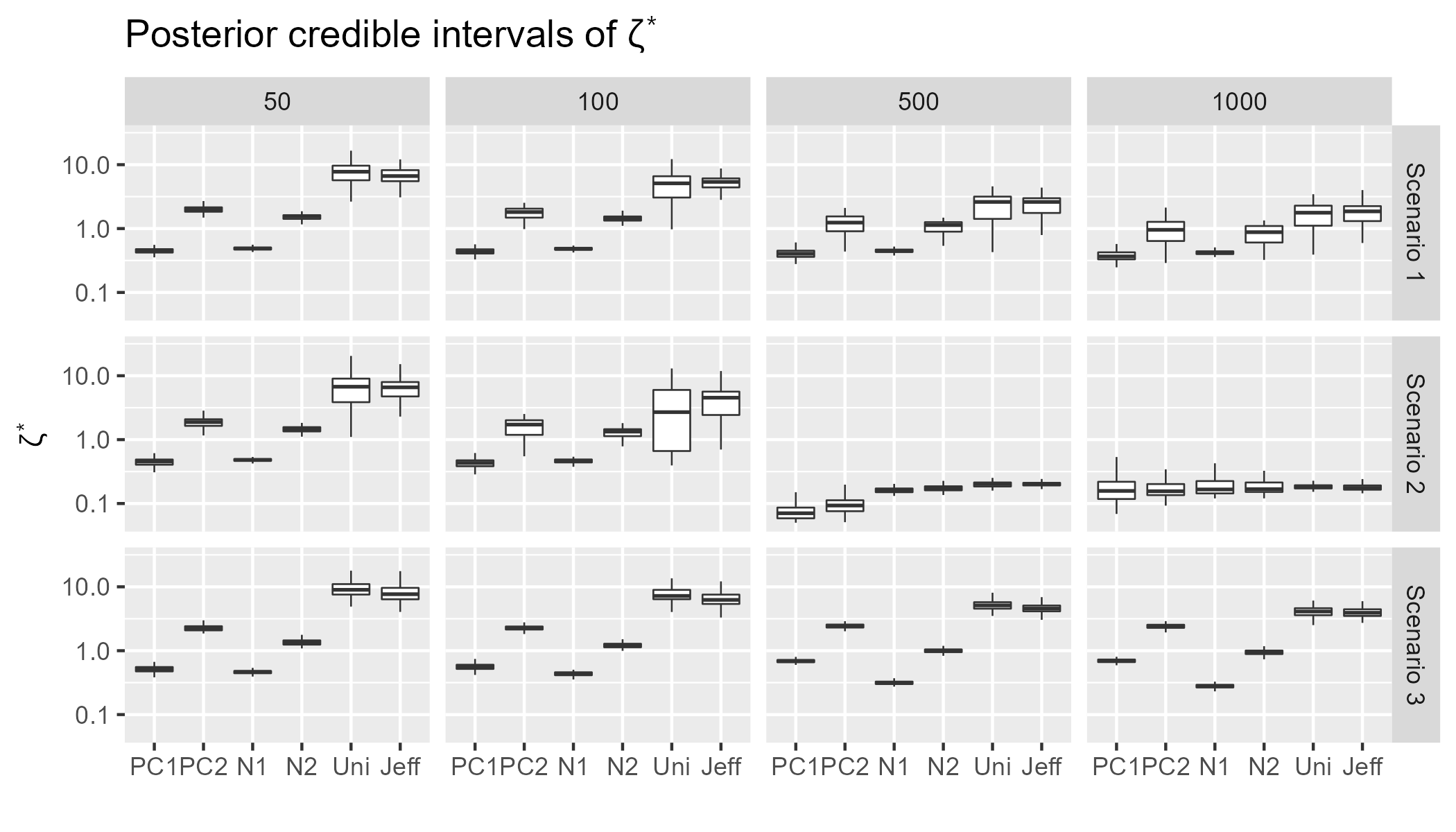}
  \caption{Histograms of the posterior means (top) and width of the credible intervals (bottom) of $\zeta^\star$ and for different sample sizes, prior configurations, and scenarios of simulation set 1.}
  \label{fig:sim32}
\end{figure}

\subsection{Results for simulation set 2}
          
The posterior means of $\eta^\star$ (Fig. \ref{fig:simsce2}) show a large sensitivity of the model to the addition of jumps in the latent field when using the inverse gamma, Jeffreys, or uniform priors. Namely, the fitted models are closer to ones with Cauchy driving noise, while this sensitivity is reduced when using PC priors. Note that the NIG distribution converges to the Cauchy distribution when $\eta \to \infty$. Also, when using the PC priors, the widths of the posterior credible intervals of $\eta^\star$ are considerably smaller, and the posterior means of $\zeta^\star$ are closer to 0.

\begin{figure}[]
    \centering
 \includegraphics[width=\linewidth]{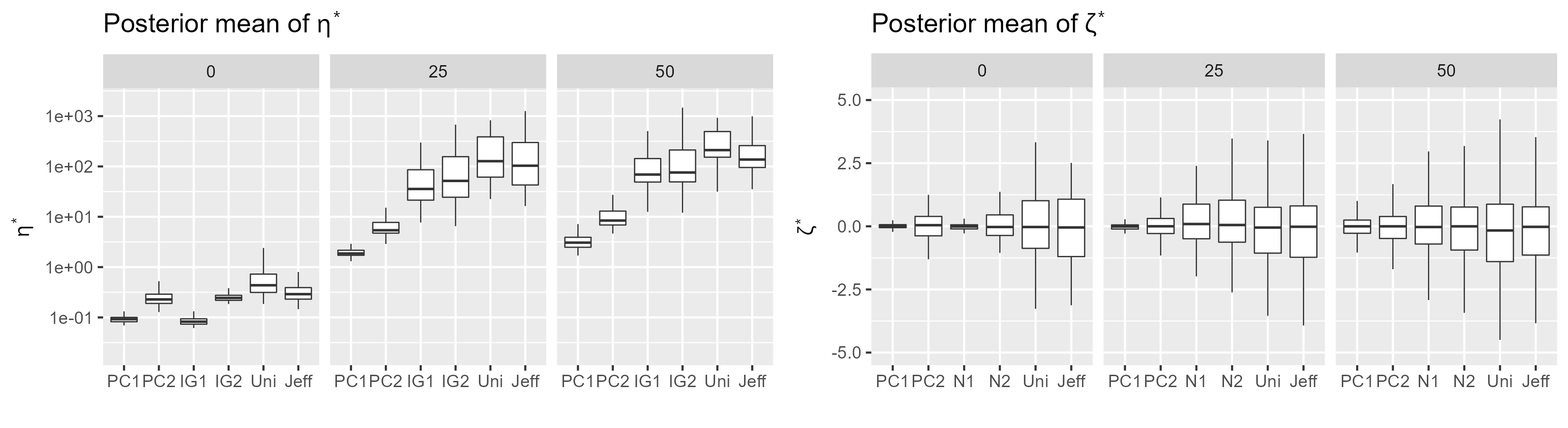}
  \includegraphics[width=\linewidth]{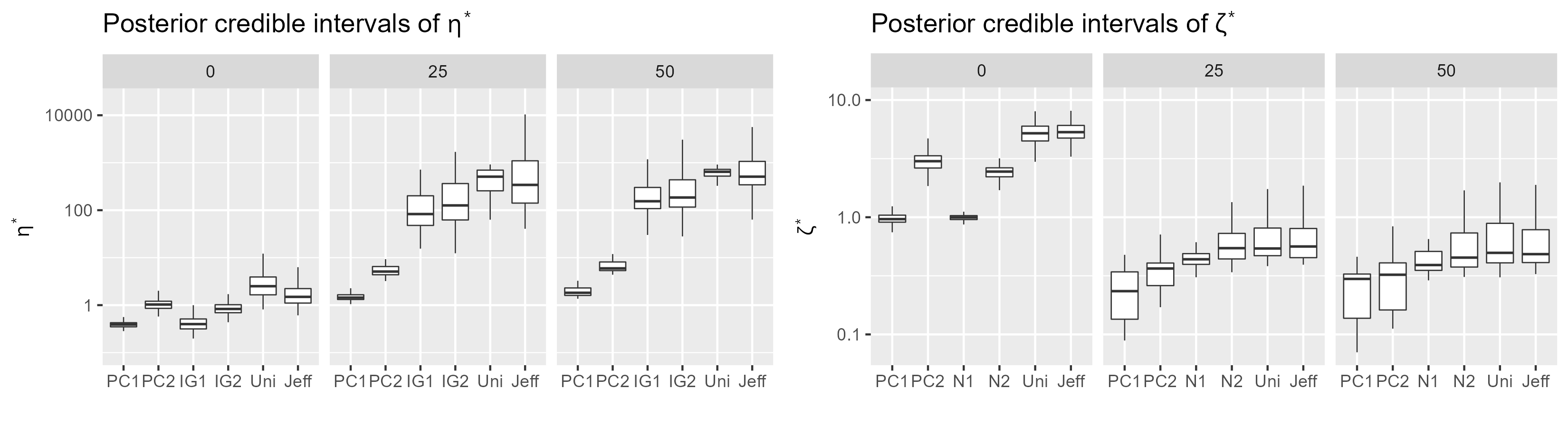}
  \caption{Histograms of the posterior means (top) and widths of the posterior credible intervals (bottom) for $\eta^\star$ (left) and $\zeta^\star$ (right) and for different prior configurations and scenarios of the simulation set 2. The size of the added jumps is shown at the top of each subfigure.}
  \label{fig:simsce2}
\end{figure}




\end{appendices}

\end{document}